\theoremstyle{plain}
\newtheorem{theorem}{Theorem}
\newtheorem{proposition}[theorem]{Proposition}
\newtheorem{lemma}[theorem]{Lemma}
\theoremstyle{remark}
\newtheorem{remark}{Remark}
\newcommand{\R}{\mathbb{R}}
\newcommand{\C}{\mathbb{C}}
\newcommand{\N}{\mathbb{N}}
\newcommand{\Z}{\mathbb{Z}}
\newcommand{\T}{\mathbb{T}}
\renewcommand{\d}{\mathrm{d}}
\newcommand{\e}{\mathrm{e}}
\renewcommand{\i}{\mathrm{i}}
\renewcommand{\H}{\mathcal{H}}
\DeclareMathOperator{\Tr}{Tr}
\newcommand{\densset}{\mathscr{I}}
\newcommand{\psiset}{\mathscr{W}}
\newcommand{\affspace}{\mathscr{X}}
\newcommand{\vrepset}{\affspace_{>0}}
\newcommand{\DMset}{\mathscr{D}}
\DeclarePairedDelimiter{\abs}{\lvert}{\rvert}
\DeclarePairedDelimiter{\norm}{\Vert}{\rVert}
\DeclarePairedDelimiter{\innerproduct}{\langle}{\rangle}
\begin{document}
\title{\texorpdfstring{$v$}{v}-representability on a one-dimensional torus at elevated temperatures}

\author{Sarina M. Sutter}
\affiliation{Theoretical Chemistry, Faculty of Exact Sciences, VU University, Amsterdam, The Netherlands}

\author{Markus Penz}
\email[Electronic address:\;]{m.penz@inter.at}
\affiliation{Arnold Sommerfeld Centre for Theoretical Physics,
Ludwig-Maximilians-Universität München, Munich, Germany}
\affiliation{Munich Center for Quantum Science and Technology, Munich, Germany}
\affiliation{Department of Computer Science, Oslo Metropolitan University, Oslo, Norway}
\affiliation{Max Planck Institute for the Structure and Dynamics of Matter and Center for Free-Electron Laser Science, Hamburg, Germany}

\author{Michael Ruggenthaler}
\affiliation{Max Planck Institute for the Structure and Dynamics of Matter and Center for Free-Electron Laser Science, Hamburg, Germany}
\affiliation{The Hamburg Center for Ultrafast Imaging, Hamburg, Germany}

\author{Robert van Leeuwen}
\affiliation{Department of Physics, Nanoscience Center, University of Jyv\"askyl\"a, Jyv\"askyl\"a, Finland}

\author{Klaas J. H. Giesbertz}
\affiliation{Theoretical Chemistry, Faculty of Exact Sciences, VU University, Amsterdam, The Netherlands}
\affiliation{Microsoft Research, AI for Science, Amsterdam, The Netherlands}

\begin{abstract}
We extend a previous result [Sutter
\textit{et al}, \textit{J.\ Phys.\ A: Math.\ Theor.} \textbf{57}, 475202 (2024)] to give an explicit form of the set of $v$-representable densities on the one-dimensional torus with any fixed number of particles in contact with a heat bath at finite temperature. The particle interaction has to satisfy some mild assumptions but is kept entirely general otherwise.
Since we find the necessary domain for densities to be the Sobolev space $H^1$, this leads to a broader dual space for potentials than the usual $L^p$ spaces and encompasses distributions.
By including temperature and thus considering all excited states in the Gibbs ensemble, Gâteaux differentiability of the thermal universal functional is achieved. 
This is equivalent to unique $v$-representability and we can prove that the given set of $v$-representable densities is even maximal.
\end{abstract}
\maketitle

\section{Introduction}\label{sec: intro}
\label{sec:intro}
Density-functional theory (DFT) is based on the one-to-one relation between densities and potentials. The famous work of \citet{hohenberg-kohn1964} introduced an energy functional in terms of the density, which as such is only well-defined for densities coming from a ground state,  a property called `pure-state $v$-representability'. The lack of knowledge about this property, termed the ``$v$-representability problem'', can be circumvented by using the constrained-search functional as introduced by \citet{percus1978} and \citet{levy1979universal}. This functional is defined for all densities coming from an $N$-particle wave function with finite kinetic energy ($N$-representability). However, for the method of \citet{KS1965} it is still crucial to solve the $v$-representability problem. The idea of this method is to reproduce the ground-state density of an interacting Hamiltonian by a non-interacting Hamiltonian with another external potential. Thus, the density needs to be simultaneously $v$-representable for an interacting and a non-interacting system. A possible extension is to allow for mixed states. \citet{levy1982electron} and \citet{Lieb1983} showed that there are ensemble ground-state densities which cannot be realized by a pure ground state.  Hence the set of $v$-representable densities is extended by allowing for ensemble states and not only pure states. In \citet{sutter2024solution}, it is shown that each strictly positive density in $H^1(\T)$ on the one-dimensional torus can be realized by a distributional potential in $H^{-1}(\T)$. Shortly after, \citet{corso2025non, corso2025rigorous} showed that for an odd number of particles on the one-dimensional torus each ground-state density of a Hamiltonian with a potential in $H^{-1}(\T)$ is non-zero. This result completes the full characterization of $v$-representable densities in this case. The same references also show that the external potential is uniquely determined by the ground-state density and that the constrained-search functional is differentiable at strictly positive densities. Additionally, \citet{carvalho2025v} presents these results for non-interacting systems but without any requirement on the number of particles.

Already in 1965 \citet{mermin1965thermal} extended the work of \citet{hohenberg-kohn1964} to non-zero temperatures.
Recently, there has been renewed, broad interest in the field of thermal DFT, ranging from the application of DFT to warm dense matter~\cite{Karasiev2016,Smith2017,Groth2017} and Hubbard models~\cite{Palamara2024} to more theoretical work concerning exact conditions~\cite{Burke2016}, the adiabatic connection~\cite{Harding2024}, a rigorous proof of the Hohenberg--Kohn theorem for systems with temperature~\cite{Garrigue2019}, and the formulation of Entropic DFT~\cite{Yousefi2023}.
For this reason, we want to try and answer the same question about $v$-representability for the case of thermal ensembles and hereby extend our recent work~\cite{sutter2024solution} from zero temperature to elevated temperatures. First results about this problem were achieved by \citet{CCR1985} for infinite lattice systems. \citet{eschrig2010t} formally argued that for the grand canonical ensemble on a $3$-dimensional torus the thermal universal functional is differentiable at $v$-representable densities and that these densities are strictly positive. However, he does not give an explicit expression for the set of $v$-representable densities. In Eschrig's work, the underlying topology is the usually considered Lebesgue space $L^3$, yet the domain of the thermal universal functional (the set of physical densities) is not open in $L^3$. The same problem persists for the usual universal functional of DFT. In particular, arbitrary close to any physical density there is a density that takes negative values. Hence the universal functional is discontinuous at each physical density. This problem was previously studied extensively by \citet{Lammert2007} for the zero temperature case. While Lammert was able to show differentiability at certain densities by considering a Sobolev space topology, he did not characterize the set of all $v$-representable densities, a problem finally solved in \citet{sutter2024solution} on the one-dimensional torus. In our previous and the present work, the problems with openness and continuity are circumvented by considering the finer $H^1$ topology instead of the $L^3$ topology. A topology is finer than another if it has more open sets, or equivalently, less convergent sequences. In one dimension it holds in terms of norms that for any $2\leq p \leq \infty$ there is a $C>0$ such that for all $f\in H^1$ one has
\begin{equation}
    \|f\|_{L^p} \leq C\|f\|_{H^1}
\end{equation}
and one says that $H^1$ is continuously embedded into $L^p$~\cite[Th.~4.12, Part~I.A]{adams-book}. Consequently, in one dimension the $H^1$ topology is finer than all $L^p$ topologies, $p\geq 2$. This leads to the set of strictly positive densities in $H^1$ being open and to a universal functional that is continuous on that set. Hence, we can subsequently investigate Gâteaux differentiability, the property that for a functional $f:X\to\R$ on a Banach space $X$ at every point $x\in U$ from an open domain $U\subseteq X$ the functional derivative
\begin{equation}
    \nabla f(x) = \lim_{\tau\to 0}\frac{f(x+\tau y)-f(x)}{\tau}
\end{equation}
exists for all directions $y\in X$.

In this work, we study the same class of systems as in \citet{sutter2024solution} but at elevated temperatures. This means that we consider a fixed number of fermionic particles on the one-dimensional torus and a very general class of interactions, including of course $W=0$. Elevated temperature means that besides the Hamiltonian an additional entropic term has to be added in the variational principle, and that instead of the usual energy we work with the Helmholtz free energy. The minimizer of the Helmholtz free energy is explicitly known in terms of the Hamiltonian and it is called Gibbs state. As demonstrated before~\cite{SutterGiesbertz2023,GiesbertzRuggenthaler2019}, the entropy term acts as a regularizer and we are not only able to show that each strictly positive density is $v$-representable but we also get (Gâteaux) differentiability of the thermal universal functional for all interactions in the considered class. Additionally, we show that each density coming from a Gibbs state is strictly positive. The same is true for zero temperature and an odd number of particles~\cite{corso2025non, corso2025rigorous}. Hence we get the full characterization of $v$-representable densities. The reason for these additional results is that the minimizer of the Helmholtz free energy does not only occupy the ground state of the Hamiltonian but partially also all excited states. Degenerate states get equal weights and this yields a unique minimizer and strict convexity of the Helmholtz functional (in terms of potentials).  We now formulate the main result of this work.

\vspace{.75em}
\noindent\fbox{\parbox{\textwidth}{\vspace{-.75em}
\begin{theorem}\label{th:main}
    Define a set of densities and a space of (distributional) one-body potentials on the one-dimensional torus $\T$,
    \begin{subequations}
    \begin{align}\label{eq:theorem1-a}
        \vrepset &= \{ \rho\in L^2(\T) \mid \nabla\rho\in L^2(\T), \smallint\rho=N, \forall x\in\T : \rho(x)>0 \}  \\
        \label{eq:theorem1-b}
        \affspace^* &= \{ [v] = \{ v + c \mid c\in \R \} \mid v=f+ \nabla g \;\text{with}\; f,g \in L^2(\T) \}.
    \end{align}
    \end{subequations}
    Then for every $\rho\in\vrepset$ there is a unique potential $v$ with equivalence class $[v]\in\affspace^*$, such that $\rho$ is the density of the Gibbs state (see Proposition~\ref{prop: minimizer of Helmhotz functional/grand potential}) corresponding to the self-adjoint Hamiltonian
    \begin{equation}
        H_v=H_0+V = -\frac{1}{2}\Delta + W + V, \quad V=\sum_{j=1}^N v(x_j),
    \end{equation}
    where $W$ is any (interaction) operator that fulfills the KLMN condition (see Eq.~\eqref{eq:KLMN condition for the interaction}). On the other hand, each density coming from a Gibbs state is contained in $\vrepset$. Moreover, the thermal universal functional $F^\beta_\mathrm{DM}(\rho)$ (defined in Eq.~\eqref{eq:F_DM-const-search}) is Gâteaux differentiable if and only if $\rho\in\vrepset$. This implies that for a fixed interaction $W$, every $\rho \in \vrepset$ is uniquely $v$-representable by a $[v]\in\affspace^*$.
\end{theorem}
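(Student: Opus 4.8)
The plan is to cast the entire statement in the language of convex duality on the dual pair built from $H^1(\T)$ and its companion space of distributional potentials, restricted to the affine slice $\{\rho \mid \smallint\rho = N\}$. First I would record that the thermal universal functional $\FDM^\beta$, defined by constrained search (minimizing the Helmholtz free energy $\Tr[H_0\Gamma]-\beta^{-1}S[\Gamma]$ over density matrices $\Gamma$ with density $\rho$), is convex and lower semicontinuous, since the admissible $\Gamma$ form a convex set and the free-energy objective is convex in $\Gamma$ (energy linear, entropy concave). Dually, the potential functional $E^\beta([v]) = -\beta^{-1}\ln\Tr \e^{-\beta H_v}$ is concave and, crucially at elevated temperature, everywhere Gâteaux differentiable with gradient the Gibbs density $\rho_v$; this is precisely the regularizing effect of the entropy that fails at $T=0$, and I would invoke Proposition~\ref{prop: minimizer of Helmhotz functional/grand potential}. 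Legendre--Fenchel duality then identifies $E^\beta$ with (the reflection of) the conjugate of $\FDM^\beta$, so that representing $\rho$ as a Gibbs density is equivalent to $[v]$ lying in the subdifferential $\partial\FDM^\beta(\rho)$ (up to the sign convention of the pairing), and because $E^\beta$ is differentiable this subgradient relation collapses to the single condition $\rho = \rho_v$.

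For the forward $v$-representability I would exploit that $\vrepset$ is open in the hyperplane $\{\smallint\rho = N\}$ of $H^1(\T)$ — this openness is the entire reason for working in the $H^1$ rather than the $L^3$ topology. On this open set $\FDM^\beta$ is finite and, being a convex function locally bounded above, continuous; and a continuous convex function on an open subset of a Banach space has a nonempty subdifferential at every point. Hence $\partial\FDM^\beta(\rho)\neq\emptyset$ for each $\rho\in\vrepset$, and by the duality above any element of it furnishes a representing potential, so $\rho$ is the density of the corresponding Gibbs state. The class $\affspace^*$ of potentials $v = f + \nabla g$ with $f,g\in L^2(\T)$ is exactly the realization of these subgradients, matching $H^{-1}(\T)$.

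The converse inclusion — that every Gibbs density lies in $\vrepset$, and in particular is strictly positive — is where I expect the real work, and I regard strict positivity as the main obstacle. Regularity ($\nabla\rho\in L^2$) follows from the KLMN form bound, which places every eigenfunction, and hence the thermally weighted density, in $H^1(\T)$. For strict positivity I would argue that the Gibbs state assigns a strictly positive weight $\propto \e^{-\beta E_n}$ to \emph{every} eigenstate, so if $\rho(x_0)=0$ at some point then every eigenfunction of $H_v$ would have to vanish at $x_0$; this contradicts completeness of the eigenbasis, equivalently the strict positivity of the heat kernel $\e^{-\beta H_v}(x_0,x_0)$ on the connected torus. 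This is the finite-temperature analogue of the Corso et al.\ argument, but controlling the full spectrum rather than finitely many occupied orbitals, and accommodating a general interaction $W$, is the delicate point.

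Finally, the differentiability characterization and uniqueness follow by assembling these pieces. A convex function is Gâteaux differentiable at a point exactly when it is continuous there with a single subgradient: on $\vrepset$ continuity holds by the second paragraph, and the subgradient is unique because the Gibbs state, and hence the representing potential modulo an additive constant, is uniquely determined — the Mermin analogue of Hohenberg--Kohn, reflecting the strict convexity induced by occupying all excited states. Conversely, if $\rho\notin\vrepset$ then either $\FDM^\beta(\rho)=+\infty$, so differentiability fails outright, or $\rho$ is $N$-representable but vanishes somewhere; by the third paragraph such a $\rho$ is never a Gibbs density, whence $\partial\FDM^\beta(\rho)=\emptyset$ and $\FDM^\beta$ is not differentiable there. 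This simultaneously establishes that $\vrepset$ is the maximal set of $v$-representable densities and that the representing $[v]\in\affspace^*$ is unique.
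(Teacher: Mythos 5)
Your overall architecture mirrors the paper's (continuity of the convex $\FDM^\beta$ on the open set $\vrepset$ giving a nonempty subdifferential; positivity of Gibbs densities via the fact that all eigenstates are occupied; differentiability from single-valuedness of the subdifferential), but there is one genuine gap at the pivot of your argument. You treat the equivalence ``$-v \in \partial \FDM^\beta(\rho)$ $\Longleftrightarrow$ $\rho$ is the Gibbs density of $H_v$'' as an automatic consequence of Fenchel duality plus everywhere-Gâteaux-differentiability of $E^\beta(v)=-\beta^{-1}\log Z(v)$ with gradient $\rho_v$, ``invoking Proposition~\ref{prop: minimizer of Helmhotz functional/grand potential}.'' That proposition only gives existence and uniqueness of the Gibbs state and the formula $\Omega^\beta(v)=-\beta^{-1}\log Z(v)$; it says nothing about differentiability of $v\mapsto\Omega^\beta(v)$ on $H^{-1}(\T)$, and such differentiability (equivalently, that the superdifferential $\superdiff\Omega^\beta(v)$ is the \emph{singleton} $\{\rho_v\}$, not merely that it contains $\rho_v$) is a nontrivial analytic statement that the paper never proves and carefully avoids needing. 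The paper stresses exactly this point in Section~\ref{sec:v-rep, subdiff}: a subgradient only makes $\rho$ a minimizer of the outer variation in Eq.~\eqref{eq: Omega-from-F}; one must \emph{additionally} show that the inner constrained-search infimum in Eq.~\eqref{eq:F_DM-const-search} is attained, else there is no state behind the density. This attainment is the technical core of the paper (Theorem~\ref{th: existence of the minimizer for the universal functional}): Banach--Alaoglu weak-* compactness, joint lower semicontinuity of the relative entropy (Lemma~\ref{lemma: lower semi-continuity of the relative entropy}), the rescaling argument with $f(c)$ to force $\Tr\{\Gamma_0\}=1$, and Lieb's compactness trick to show the limit has density $\rho$. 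Your route would legitimately bypass Theorem~\ref{th: existence of the minimizer for the universal functional} \emph{if} you proved differentiability of $\Omega^\beta$, but as written you have replaced the hardest step by an unproven assertion of comparable difficulty; note also that strict concavity of $\Omega^\beta$ (the paper's Lemma~\ref{lem: strict concavity Helmholtz free energy}, which you correctly gesture at for uniqueness of the potential) does \emph{not} imply singleton superdifferentials, so it cannot substitute.

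Two secondary glosses compound this. First, your claim that $\FDM^\beta$ is lower semicontinuous ``since the admissible $\Gamma$ form a convex set and the objective is convex'' is a non sequitur: convexity does not give lower semicontinuity, and the paper never claims global lsc --- it only shows $\FDM^\beta=\FDM^{\beta**}$ pointwise on $\vrepset$ via continuity there, which is what licenses the subdifferential inversion (Lemma~\ref{lem:equiv. of subdifferentials for function and conjugate}). Second, in the converse direction you assert that the KLMN bound ``places every eigenfunction, and hence the thermally weighted density, in $H^1$''; the passage from individual eigenfunctions to the \emph{infinite} convex combination is exactly the content of Proposition~\ref{prop: Gibbs state density in densset}, which requires the min-max eigenvalue bounds $(1-a)\mu_n-b\leq\lambda_n\leq(1+a)\mu_n+b$, finiteness of $\Tr\{T\Gamma_v\}$, convexity of the von Weizsäcker term with Jensen's inequality, and another weak-compactness extraction --- it does not follow from the form bound alone. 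Your positivity argument itself (all eigenfunctions would vanish at $x_0$, contradicting that the eigenbasis spans, via $H^1$ trace/heat-kernel positivity) does match the paper's Theorem~\ref{th: Gibbs state density non-zero}, and your final assembly of the differentiability characterization would be sound once the subgradient--Gibbs equivalence and the uniqueness lemma are actually established.
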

\vspace{-.75em}}}
\vspace{.75em}

The property of `unique $v$-representability' is equivalent to the famous Hohenberg--Kohn theorem~\cite{hohenberg-kohn1964,Garrigue2018HK}.
To prove the stated theorem above, we start by introducing the basic setting for an $N$-particle system on a one-dimensional torus and the relevant spaces for densities and potentials (Section~\ref{sec:setting}). Afterwards, we introduce the Helmholtz free energy and present known results about it and its unique minimizer, the Gibbs state, together with the definition of the thermal universal functional (Section~\ref{sec: Helmholtz and Gibbs}). Next, we show the non-emptiness of the subdifferential of the thermal universal functional that is a necessary condition for $v$-representability (Section~\ref{sec:v-rep, subdiff}). This is followed by an application of the preceding results to conclude $v$-representability (Section~\ref{sec: minimum}). We then prove the converse statement, that each Gibbs state indeed yields a density in $\vrepset$, to achieve a full characterization of the set of $v$-representable densities (Section~\ref{sec:circle}). Lastly, we show that the thermal universal functional is Gâteaux differentiable (Section~\ref{sec:Gateaux}). We conclude with the proof of Theorem~\ref{th:main} that is just a collection of previous results and an outlook (Section~\ref{sec:summary}).

\begin{table}[h]
\caption{Notation Table}
\begin{tabular}{ll}
\toprule
Symbol & Description \\
\midrule
$\T$ & one-dimensional torus \\
$L^p$ & real Lebesgue spaces \\
$L^p_{\C}$ & complex Lebesgue spaces \\
$H^1$ & real Sobolev space $W^{1,2}$\\
$H^1_{\C}$ & complex Sobolev space $W^{1,2}$, Eq.~\eqref{eq:complex-H1}\\
$H^{-1}$ & topological dual of the real Sobolev space\\
$\H$ & antisymmetric Hilbert space for $N$ particles, Eq.~\eqref{eq:HS}\\
$\psiset$ & antisymmetric wave function space based on $H^1_\C$, Eq.~\eqref{eq:psiset} \\
$\DMset$ & set of density matrices, Eq.~\eqref{eq:DMset} \\
$\densset$ & set of physical densities, Eq.~\eqref{eq:densset} \\
$\affspace$ & affine space of normalized densities, Eq.~\eqref{eq:affspace} \\
$\vrepset$ & strictly positive subset of the affine space, Eq.~\eqref{eq:theorem1-a} \\
$\affspace^*$ & topological dual of the affine space, Eq.~\eqref{eq:theorem1-b} \\
$\Omega^\beta_v$ & Helmholtz free energy of a density matrices, Eq.~\eqref{eq:Helmholtz-free-energy} \\
$\Omega^\beta$ & Helmholtz functional on potentials, Eq.~\eqref{def: Helmholtz free energy} \\
$F^\beta_\mathrm{DM}$ & thermal universal functional, Eq.~\eqref{eq:F_DM-const-search} \\
\bottomrule
\end{tabular}
\end{table}

\section{One-dimensional torus setting}
\label{sec:setting}

We consider a system at an elevated temperature $T$ consisting of $N$ spin $\frac{1}{2}$-particles. Choosing a different spin as an internal degree of freedom (or none at all) will not change any of the following results, we just use spin $\frac{1}{2}$ and antisymmetric wave functions to relate more closely to the usual systems considered in quantum chemistry. The particles are on the one-dimensional torus $\T$, meaning that they are confined to the interval $[0,1]$ where the points $0$ and $1$ are identified with each other. The underlying complex Hilbert space is then 
\begin{equation}\label{eq:HS}
    \H = \bigl(L^2_{\C}(\T)\otimes\C^2\bigr)^{\wedge N}.
\end{equation}
We also define the complex Sobolev space on the torus
\begin{equation}\label{eq:complex-H1}
    H_{\C}^1(\T)=\{f\in L^2_{\C}(\T)\mid\nabla f\in L^2_{\C}(\T)\}
\end{equation}
(like in Eq.~\eqref{eq:theorem1-a} this means a weak derivative) and the form domain of the kinetic energy operator
\begin{equation}
    Q(-\tfrac{1}{2}\Delta)=\bigl(H_{\C}^1(\T)\otimes \C^2\bigr)^{\wedge N}.
\end{equation}
Together with the normalization constraint this gives the set of wave functions 
\begin{align}\label{eq:psiset}
    \psiset = \{ \Psi \in (H_{\C}^1(\T)\otimes\C^2)^{\wedge N} \mid \|\Psi\|_{\H}=1 \}.
\end{align}
The one-particle density $\rho_\Psi$ for any $\Psi\in\psiset$ is obtained by integrating out all but one of the spatial coordinates from the modulus squared of a wave function summed over all spin components,
\begin{equation}\label{eq:def-dens}
    \rho_\Psi(x) = N \sum_{\sigma_1,\ldots,\sigma_N} \int \abs{\Psi (x\sigma_1, x_2\sigma_2, \dotsc, x_N\sigma_N)}^2 \d x_2 \dotsb \d x_N.
\end{equation}
As already stated in Theorem~\ref{th:main}, the considered Hamiltonians are of the form $H_v=H_0+V$, where $H_0$ consists of the kinetic energy operator and an interaction term.
We generally write $H_v$ instead of $H_{[v]}$, even though the considered potentials come from equivalence classes, since adding a constant to the potential just shifts the eigenvalues while not affecting the eigenstates.
The interaction is kept general and just needs to fulfill the KLMN condition~\cite[Th.~X.17]{reed-simon-2} with respect to the kinetic energy operator. This means that for a fixed interaction $W$ there are $0\leq a< 1$ and $b \geq 0$ such that for each $\Psi\in Q(-\frac{1}{2}\Delta)$ it holds
\begin{align}
    \label{eq:KLMN condition for the interaction}
            |\langle \Psi,W\Psi\rangle | \leq a\langle \Psi,-\tfrac{1}{2}\Delta \Psi \rangle + b \langle \Psi,\Psi \rangle.
\end{align}
This condition guarantees that the Hamiltonian $H_0=-\frac{1}{2}\Delta+W$ is self-adjoint with the same quadratic form domain as the kinetic energy operator. In \citet[Appendix]{sutter2024solution} this condition was shown to hold for interactions in $L^1([-1,1])$ and for distributional interactions of the form $\nabla g$ with $g \in L^2([-1,1])$. Here and in the following we use the notion of quadratic forms instead of linear operators~\cite[Sec.~VIII.6]{reed-simon-1}. For the kinetic energy operator this is achieved directly by integration by parts
\begin{equation}\label{eq:Laplace-expectation-val}
    \frac{1}{2}\langle \Psi, -\Delta \Psi \rangle =\frac{1}{2} \sum_{j=1}^N \langle \Psi, -\Delta_j \Psi \rangle = \frac{1}{2}\sum_{j=1}^N \langle \nabla_j \Psi, \nabla_j \Psi \rangle = T(\Psi).
\end{equation}
In the following theorem we collect some important previous results~\cite{Lieb1983,sutter2024solution}. 
\begin{theorem}\label{th:N-rep and properties of densset}
    We define the following set of physical densities,
    \begin{equation}\label{eq:densset}
    \densset = \{ \rho \in L^1(\T) \mid \nabla\sqrt{\rho} \in L^2(\T), \rho \geq 0, \smallint \rho = N \},
\end{equation}
and an affine space of normalized densities in the real Sobolev space $H^1(\T)$, 
\begin{equation}\label{eq:affspace}
    \affspace = \{ \rho \in H^1(\T) \mid \smallint \rho = N \}.
\end{equation}
Then, the following holds.
\begin{enumerate}[label=(\roman*)]
    \item 	Let $\Psi \in \psiset$. Then $\rho_\Psi\in\densset$ and it holds
    \begin{equation}\label{eq:psi-dens-estimate-1}
        \int \left|\nabla\sqrt{\rho_\Psi(x)}\right|^2 \d x \leq 2T(\Psi).
    \end{equation}
    \item     There exist two constants $C_1,C_2 > 0$ such that for every $\rho \in \densset$ there is a $\Psi \in \psiset$ that has $\rho_\Psi=\rho$ and fulfills the following estimate,
    \begin{equation}\label{eq:psi-dens-estimate-2}
        T(\Psi)\leq C_1 + C_2 \int \left| \nabla \sqrt{\rho(x)}\right | ^2 \d x.
    \end{equation}
    \item\label{item:embeddings} $\vrepset \subset \densset \subset \affspace \subset H^{1}(\T)\hookrightarrow C^0(\T) \hookrightarrow L^\infty(\T) \hookrightarrow L^3(\T) \hookrightarrow L^2(\T) \hookrightarrow L^1(\T)$, where $X \hookrightarrow Y$ means that $X$ is continuously embedded in $Y$.
\end{enumerate}
\end{theorem}

\citet{Lieb1983} showed that the set $\densset$ is convex. This implies that (finite) convex combinations of elements in $\densset$ lie again in $\densset$. Motivated by the observations for the zero temperature case~\cite{sutter2024solution}, we identify the space of external potential by $\affspace^*$, the topological dual of the subspace $\{\rho\in H^1(\T)\mid\int\rho=0\}$ that is isomorphic to $\affspace$. This space $\affspace^*$ contains equivalence classes, where the elements are from the space $H^{-1}(\T)$ and where two elements are in the same class if and only if they differ only by a constant, i.e., like in Eq.~\eqref{eq:theorem1-b},
\begin{equation}\label{eq:v-dual-space}
    \affspace^* = \{ [v] = \{ v + c \mid c\in \R \} \mid v=f+ \nabla g \;\text{with}\; f,g \in L^2(\T) \}.
\end{equation}
The dual pairing of an element $v=f+\nabla g \in H^{-1}(\T)$ with an element $\varphi \in H^1(\T)$ is given by $v(\varphi) = \langle v,\varphi \rangle = \langle f,\varphi \rangle - \langle g,\nabla \varphi \rangle$. Using $\affspace^*$ as potential space automatically guarantees that the potential energy is finite. That Hamiltonians of the form $H=H_0+V$ with $V$ defined by a $v \in \affspace^*$ are connected to a self-adjoint operator was shown in \citet[Th.~18]{sutter2024solution} by using the KLMN theorem~\cite[Th.~X.17]{reed-simon-2}.

\section{Helmholtz free energy and Gibbs state}\label{sec: Helmholtz and Gibbs}

Considering a system at fixed temperature $T>0$ (through coupling to a heat bath) means that we cannot rely on a variational principle for the internal energy $E_v(\Gamma) = \Tr\{\Gamma H_v\}$ any more \cite{binder2018thermodynamics}. Instead, we have to consider the Helmholtz free energy for arbitrary density matrices, 
\begin{equation}\label{eq:Helmholtz-free-energy}
    \Omega_v^\beta(\Gamma) = E_v(\Gamma) - T S(\Gamma) = \Tr \bigl\{\Gamma\bigl(H_v+\beta^{-1}\log(\Gamma)\bigr)\bigr\},
\end{equation}
where $S(\Gamma)=-\Tr \{\Gamma \log(\Gamma)\}$ is the entropy and $\beta=T^{-1}$ is the inverse temperature.
The variational principle then yields the Helmholtz functional,
\begin{equation}\label{def: Helmholtz free energy}
        \Omega^\beta(v)=\inf_{\Gamma\in\DMset}\Omega^\beta_v(\Gamma)= \inf_{\Gamma\in\DMset}\Tr \bigl\{\Gamma\bigl(H_v+\beta^{-1}\log(\Gamma)\bigr)\bigr\}.
\end{equation}
Here the infimum is taken over the set of all density matrices,
\begin{equation}\label{eq:DMset}
    \DMset = \left\{ \Gamma=\sum_{k=1}^\infty \lambda_k \langle \Psi_k,\cdot \rangle \Psi_k \;\middle|\; \lambda_k\geq 0, \Tr\{\Gamma\}=1, \Psi_k \in \psiset \;\text{all orthogonal}, S(\Gamma) <\infty \right\}.
\end{equation}
Making the Helmholtz free energy stationary with respect to small perturbations $\delta \Gamma$ with $\Tr\{\delta \Gamma \}=0$ gives the minimizer. 
\begin{proposition}
\label{prop: minimizer of Helmhotz functional/grand potential}
    The unique minimizer of the Helmholtz free energy is given by $\Gamma_v=\e^{-\beta H_v}/Z(v) \in\DMset$, where $Z(v)=\Tr \{\e^{-\beta H_v}\}$ is called the partition function. The state $\Gamma_v$ is called the Gibbs state and only exists if $Z(v)$ is finite. Moreover, for each $v\in H^{-1}(\T)$ the Gibbs state exists and the Helmholtz free energy is given by 
    \begin{align}\label{eq: Helmholtz free energy}
        \Omega_v^\beta(\Gamma_v)= \Omega^\beta(v)=-\beta^{-1}\log\bigl(Z(v)\bigr).
    \end{align}
\end{proposition}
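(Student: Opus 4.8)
The plan is to separate the statement into two largely independent parts: the variational characterisation of the minimiser (which also delivers the value $\Omega^\beta(v)=-\beta^{-1}\log Z(v)$), and the analytic fact that $Z(v)$ is finite for every $v\in\affspace^*$. The first part I would treat by the standard Gibbs argument through quantum relative entropy; the second by an operator lower bound combined with the explicit spectrum of the kinetic energy on the torus. I expect essentially all the difficulty to sit in the second part.

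For the variational part, suppose for the moment that $Z(v)=\Tr\{\e^{-\beta H_v}\}<\infty$, so that $\Gamma_v=\e^{-\beta H_v}/Z(v)$ is a genuine density matrix with full support and $\log\Gamma_v=-\beta H_v-\log Z(v)$. For any admissible $\Gamma$ of finite free energy (if $\Omega^\beta_v(\Gamma)=+\infty$ there is nothing to show) I would insert this expression for $H_v$ into $\Omega^\beta_v(\Gamma)=\Tr\{\Gamma H_v\}+\beta^{-1}\Tr\{\Gamma\log\Gamma\}$ and use $\Tr\{\Gamma\}=1$ to obtain the identity
\[
\Omega^\beta_v(\Gamma)=\beta^{-1}\Tr\{\Gamma(\log\Gamma-\log\Gamma_v)\}-\beta^{-1}\log Z(v).
\]
The first term is $\beta^{-1}$ times the relative entropy $S(\Gamma\Vert\Gamma_v)$, which by Klein's inequality is non-negative and vanishes precisely when $\Gamma=\Gamma_v$. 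Hence the infimum in Eq.~\eqref{def: Helmholtz free energy} is attained uniquely at $\Gamma_v$ with value $-\beta^{-1}\log Z(v)$, proving both the uniqueness of the minimiser and Eq.~\eqref{eq: Helmholtz free energy} at once.

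The real work is to establish $Z(v)<\infty$ for every $v\in\affspace^*$. Here I would first produce an operator lower bound $H_v\geq c\,(-\tfrac12\Delta)-C$ with constants $c>0$ and $C\geq0$. The interaction $W$ is form-bounded relative to $-\tfrac12\Delta$ with some fixed constant $a<1$ by the KLMN assumption~\eqref{eq:KLNM condition for the interaction}. The distributional potential $V$ associated with $v=f+\nabla g$ is moreover \emph{infinitesimally} form-bounded: using the one-dimensional Sobolev embedding $H^1(\T)\hookrightarrow L^\infty(\T)$ from Theorem~\ref{th:N-rep and properties of densset}(iii) together with Gagliardo--Nirenberg interpolation (as in the form-boundedness estimates of \cite{sutter2024solution}), the form $\langle\Psi,V\Psi\rangle$ can be bounded by $\eps\langle\Psi,-\tfrac12\Delta\,\Psi\rangle+C_\eps\|\Psi\|^2$ for arbitrarily small $\eps>0$. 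Choosing $\eps<1-a$ keeps the combined relative bound strictly below one and yields the claimed lower bound with $c=1-a-\eps>0$. By the min-max principle this operator inequality transfers to the ordered eigenvalues, $E_k(H_v)\geq c\,E_k(-\tfrac12\Delta)-C$, so that $Z(v)\leq\e^{\beta C}\,\Tr\{\e^{-\beta c(-\frac12\Delta)}\}$.

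It then remains to check that the free trace on the right is finite. On $\T$ the single-particle operator $-\tfrac12\Delta$ has eigenvalues $2\pi^2 n^2$, $n\in\Z$, each of finite multiplicity (also counting the two spin channels), and the $N$-particle antisymmetrised eigenvalues are sums of these; the corresponding trace converges because $\sum_{n\in\Z}\e^{-2\pi^2\beta c\,n^2}<\infty$, and a finite number of antisymmetrised tensor factors preserves convergence. This gives $Z(v)<\infty$ and closes the argument. The main obstacle is thus the clean verification of infinitesimal form-boundedness of the $H^{-1}$ potential so that the combined KLMN constant stays below one; the variational identity and the convergence of the free partition function are routine once this lower bound is secured.
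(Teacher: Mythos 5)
Your proposal is correct, and both halves take a genuinely different route from the paper's own proof. For the minimiser, the paper argues by making $\Omega^\beta_v$ stationary under perturbations $\delta\Gamma$ with $\Tr\{\delta\Gamma\}=0$, Taylor-expanding $x\log x$ to first order and solving $H_v+\beta^{-1}\log\Gamma_v=\mathrm{const}$; your identity $\Omega^\beta_v(\Gamma)=\beta^{-1}S(\Gamma\,\Vert\,\Gamma_v)-\beta^{-1}\log Z(v)$ combined with Klein's inequality is more robust, since it yields global minimality, uniqueness, and Eq.~\eqref{eq: Helmholtz free energy} in one stroke without any formal operator expansion (the paper's first-variation computation only identifies critical points), and it is in fact the normalized special case of the relative-entropy relation the paper imports later as Lemma~\ref{lemma: relation relative entropy and Helmholtz functional}. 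For the finiteness of $Z(v)$, the paper instead sandwiches the free energy: the combined KLMN bound gives $\Omega^\beta(v)\geq(1-a)\inf_\Gamma\Tr\{\Gamma(T+(\beta(1-a))^{-1}\log\Gamma)\}-b>-\infty$ via the explicit free spectrum on $\T$, while truncated Gibbs states $\Gamma_n$ built from the first $n$ eigenstates of $H_v$ give $\Omega^\beta(v)<-\beta^{-1}\log Z_n[v]$, so the partial sums $Z_n[v]$ are uniformly bounded and $Z(v)<\infty$. Your route---transferring the form bound $H_v\geq c\,(-\tfrac12\Delta)-C$ to ordered eigenvalues by min-max and comparing partition functions directly, $Z(v)\leq\e^{\beta C}\Tr\{\e^{\beta c\Delta/2}\}$---is precisely the eigenvalue-comparison tool the paper proves anyway in Appendix~B (Lemma~\ref{lem: lower, upper bound for eigenvalues}) in the service of Proposition~\ref{prop: Gibbs state density in densset}, so you front-load that lemma and obtain a more direct argument; the infinitesimal form bound on $v\in H^{-1}(\T)$ that you invoke is exactly what the paper cites from \cite{sutter2024solution}. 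Two small points to make explicit: the min-max transfer requires that $H_v$ has purely discrete spectrum, which holds by the compact embedding $H^1(\T^N)\hookrightarrow L^2(\T^N)$ (the paper states this where it introduces the spectral decomposition), and the trace rearrangement in your entropy identity for unbounded $H_v$ should be justified under the finite-free-energy caveat you already flag, since $\Tr\{\Gamma H_v\}$ and $\Tr\{\Gamma\log\Gamma\}$ must be combined as extended-real quantities.
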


\begin{remark}
For an example of a physical system with infinite partition function simply consider the hydrogen atom in the infinite-volume space $\R^3$~\cite[Sec.~2.3]{GiesbertzRuggenthaler2019}. The eigenenergies are $-1/(2n^2)$ with multiplicity $n^2$, $n \in \N$. Then the partition function is 
\begin{equation}
    Z(v)=\sum \limits_{n=1}^\infty n^2 \e^{\beta/(2n^2)}\geq \sum \limits_{n=1}^\infty n^2=\infty.
\end{equation}
\end{remark}

The proof of Proposition~\ref{prop: minimizer of Helmhotz functional/grand potential} can be found in Appendix \ref{app: proof minimizer properties}.  
Proposition \ref{prop: minimizer of Helmhotz functional/grand potential} implies that the density of the Gibbs state is an \emph{infinite} convex combination of elements in $\densset$. Hence we cannot directly conclude that its density is contained in $\densset$, but we will show this statement with the following proposition.
\begin{proposition}\label{prop: Gibbs state density in densset}
    The density of the Gibbs state $\Gamma_v$ is contained in $\densset$ for all $v \in H^{-1}(\T)$.
\end{proposition}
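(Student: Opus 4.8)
The plan is to use that the Gibbs state of Proposition~\ref{prop: minimizer of Helmhotz functional/grand potential} diagonalizes as $\Gamma_v=\sum_k\lambda_k\langle\Psi_k,\cdot\rangle\Psi_k$ with weights $\lambda_k=\e^{-\beta E_k}/Z(v)$ and orthonormal eigenstates $\Psi_k\in\psiset$ of $H_v$ (which lie in the form domain $Q(-\tfrac12\Delta)$, so that $T(\Psi_k)<\infty$), hence its density is the \emph{infinite} convex combination $\rho:=\rho_{\Gamma_v}=\sum_k\lambda_k\rho_{\Psi_k}$. Of the three defining conditions of $\densset$, positivity $\rho\ge0$ and the normalization $\smallint\rho=N$ are immediate from $\Tr\Gamma_v=1$ and the definition~\eqref{eq:def-dens} of the density, and $\rho\in L^1(\T)$ is clear. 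So the only substantial point is to show $\nabla\sqrt{\rho}\in L^2(\T)$. My strategy is to reduce this to the finiteness of the thermal kinetic energy $\Tr\{\Gamma_v(-\tfrac12\Delta)\}=\sum_k\lambda_k T(\Psi_k)$ by means of the convexity of the von Weizsäcker functional $J(\rho):=\int|\nabla\sqrt{\rho}|^2=\tfrac14\int|\nabla\rho|^2/\rho$.

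The key structural fact is that $(s,t)\mapsto t^2/s$ is jointly convex on $(0,\infty)\times\R$, so $J$ is convex; moreover $J$ is sequentially lower semicontinuous with respect to $L^1$-convergence, since $\rho_n\to\rho$ in $L^1$ gives $\sqrt{\rho_n}\to\sqrt{\rho}$ in $L^2$ and $\|\nabla\,\cdot\,\|_{L^2}^2$ is weakly lower semicontinuous on that subsequence. I would apply this to the normalized truncations $\rho^{(n)}:=\Lambda_n^{-1}\sum_{k\le n}\lambda_k\rho_{\Psi_k}$ with $\Lambda_n:=\sum_{k\le n}\lambda_k\to1$, each of which lies in $\densset$ by Lieb's convexity of $\densset$, and which converge to $\rho$ in $L^1(\T)$. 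Combining finite convexity, the pointwise estimate~\eqref{eq:psi-dens-estimate-1} of Theorem~\ref{th:N-rep and properties of densset}(i), and lower semicontinuity yields
\begin{align}
    J(\rho)\le\liminf_{n\to\infty}J\bigl(\rho^{(n)}\bigr)
    \le\liminf_{n\to\infty}\Lambda_n^{-1}\sum_{k\le n}\lambda_k J(\rho_{\Psi_k})
    \le 2\sum_{k}\lambda_k T(\Psi_k)
    = 2\,\Tr\{\Gamma_v(-\tfrac12\Delta)\}.
\end{align}
Thus $\nabla\sqrt{\rho}\in L^2(\T)$ will follow once the thermal kinetic energy is shown to be finite.

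The hard part will be the bound $\Tr\{\Gamma_v(-\tfrac12\Delta)\}<\infty$, which is where the KLMN assumption enters. The interaction satisfies~\eqref{eq:KLNM condition for the interaction} with some $a<1$, and a potential $v\in\affspace^*=H^{-1}(\T)$ is relatively form-bounded with respect to $-\tfrac12\Delta$ with \emph{arbitrarily small} relative bound (on the one-dimensional torus this follows from $\abs{\langle\nabla g,|\varphi|^2\rangle}\le2\|g\|_{L^2}\|\varphi\|_{L^\infty}\|\nabla\varphi\|_{L^2}$ together with $H^1(\T)\hookrightarrow L^\infty(\T)$ and Young's inequality, as used in \cite{sutter2024solution}). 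Choosing this bound small enough gives a combined bound $\tilde a<1$, so that on each normalized eigenstate $T(\Psi_k)\le(1-\tilde a)^{-1}\bigl(E_k+\tilde b\bigr)$, and therefore
\begin{align}
    \Tr\{\Gamma_v(-\tfrac12\Delta)\}
    =\sum_k\lambda_k T(\Psi_k)
    \le\frac{1}{1-\tilde a}\Bigl(\sum_k\lambda_k E_k+\tilde b\Bigr)
    =\frac{1}{1-\tilde a}\Bigl(\Tr\{\Gamma_v H_v\}+\tilde b\Bigr).
\end{align}
The remaining task is then to see that $\Tr\{\Gamma_v H_v\}=Z(v)^{-1}\sum_k E_k\e^{-\beta E_k}$ converges: since $H_v$ has discrete spectrum bounded below (it is a form perturbation of $-\tfrac12\Delta$, which has compact resolvent on $\T$) and $xe^{-\beta x}\le\tfrac{2}{e\beta}\e^{-\beta x/2}$, the series is dominated by the partition function at inverse temperature $\beta/2$, which is finite by Proposition~\ref{prop: minimizer of Helmhotz functional/grand potential}. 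Assembling the two displayed estimates gives $J(\rho)<\infty$ and hence $\rho\in\densset$, completing the argument. The main obstacles I anticipate are verifying the small relative form bound for $H^{-1}$ potentials (ensuring $\tilde a<1$) and controlling the energy series; the convexity/lower-semicontinuity step, although the conceptual heart of the reduction, is otherwise routine.
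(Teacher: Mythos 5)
Your argument is correct and shares the paper's overall skeleton: decompose the Gibbs density as a convex series of eigenstate densities, truncate, use Lieb's convexity of $\densset$ and the estimate \eqref{eq:psi-dens-estimate-1} to bound the von Weizsäcker functional by the thermal kinetic energy, and pass to the limit -- your ``lower semicontinuity of $J$ under $L^1$ convergence'' is exactly the paper's closing step (Banach--Alaoglu for $\sqrt{\rho^n}$ bounded in $H^1$, compact embedding into $L^2$, identification of the limit), and your normalization by $\Lambda_n^{-1}$ is a cosmetic improvement that keeps the truncations inside $\densset$, whereas the paper applies Jensen's inequality with subnormalized weights (implicitly using the degree-one homogeneity of $J$). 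The one genuinely different step is how $\Tr\{H_v\Gamma_v\}<\infty$ is obtained. The paper proves a min-max comparison lemma for the eigenvalues, $(1-a)\mu_n-b\leq\lambda_n\leq(1+a)\mu_n+b$ (Lemma \ref{lem: lower, upper bound for eigenvalues}), and then sums explicitly against the free spectrum as in Eq.~\eqref{eq: finite Z_0(0)}; you instead dominate $\sum_k E_k\e^{-\beta E_k}$ by the partition function at inverse temperature $\beta/2$ via $x\e^{-\beta x}\leq\tfrac{2}{\e\beta}\e^{-\beta x/2}$ and invoke Proposition~\ref{prop: minimizer of Helmhotz functional/grand potential} at the halved inverse temperature. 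This is more elementary (no min-max principle needed) and legitimate, since that proposition holds for every $\beta>0$; two small points should be made explicit: the inequality $x\e^{-\beta x}\leq\tfrac{2}{\e\beta}\e^{-\beta x/2}$ requires $x\geq 0$, so you should first shift $H_v$ by a constant (allowed, as potentials are equivalence classes modulo constants) or treat the finitely many negative eigenvalues separately, and eigenfunctions $\Psi_k$ lie in the form domain $Q(-\tfrac12\Delta)$ so that $T(\Psi_k)$ is finite, as you note. The final KLMN reduction $T(\Psi_k)\leq(1-\tilde a)^{-1}(E_k+\tilde b)$ with combined relative bound $\tilde a<1$ (using the arbitrarily small form bound for $v\in H^{-1}(\T)$ from \cite{sutter2024solution}) is the same in both proofs.
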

The proposition is proven in Appendix \ref{app: proof Gibbs state density in densset}. It tells us that $v$-representable densities at elevated temperatures are always contained in $\densset$. Hence we can split the minimization of the Helmholtz free energy $\Omega^\beta_v$ into first a variation over densities in $\densset$ followed by a minimization over all density matrices which yield such a density, just like in the zero temperature case,
\begin{subequations}
\begin{align}\label{eq: Omega-from-F}
    &\Omega^\beta(v)=\inf_{\rho \in \densset}\{F^\beta_\mathrm{DM}(\rho)+\innerproduct{v,\rho}\} \quad \text{where} \\
    \label{eq:F_DM-const-search}
    &F^\beta_\mathrm{DM}(\rho)=\inf_{\substack{\Gamma\in\DMset \\ \Gamma \mapsto \rho}} \Tr \{\Gamma(H_0+\beta^{-1}\log(\Gamma))\}.
\end{align}
\end{subequations}
The functional $F_{\mathrm{DM}}^\beta(\rho)$ is called the thermal universal functional and we can extend it to $\affspace$ by setting $F_{\mathrm{DM}}^\beta(\rho)=\infty$ for all $\rho \in \affspace \setminus \densset$. Note that because of the existence of the Gibbs state and Propositions \ref{prop: minimizer of Helmhotz functional/grand potential} and \ref{prop: Gibbs state density in densset}, the infimum in the variation over the densities is always attained, whereas for the second infimum (in the definition of $F_\mathrm{DM}^\beta(\rho)$) we only know about the existence of a minimizer for $v$-representable densities which are characterized in Theorem \ref{th:main}.

Like in the zero-temperature case~\cite{Lieb1983,sutter2024solution}, we would like to have a convex formulation of thermal density-functional theory to exploit the properties of convex functions. Due to the concavity of the entropy~\cite[Th.~12]{GiesbertzRuggenthaler2019} and linearity of $\Gamma \to \rho$ the following proposition follows.
\begin{proposition}\label{prop:convexity of Helmholtz and universal functional}
The following two statements hold.
    \begin{enumerate}[label=(\roman*)]
    \item The Helmholtz free energy $\Gamma\mapsto\Omega^\beta_v(\Gamma)$ is convex on $\DMset$.
    \item The thermal universal functional $\rho\mapsto F^\beta_\mathrm{DM}(\rho)$ is convex on $\affspace$.
\end{enumerate}
\end{proposition}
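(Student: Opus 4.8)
The plan is to treat the two parts in order, since (ii) will rest on (i). For part~(i), I would split the Helmholtz functional into its two natural pieces, $\Omega^\beta_v(\Gamma) = E_v(\Gamma) - TS(\Gamma)$. The internal energy $E_v(\Gamma) = \Tr\{\Gamma H_v\}$ is \emph{linear} in $\Gamma$ and therefore trivially convex. For the entropic term I would invoke the concavity of the von Neumann entropy $S$, cited as \cite[Th.~12]{GiesbertzRuggenthaler2019}; since $T = \beta^{-1} > 0$, the term $-TS(\Gamma)$ is then convex. A sum of a linear and a convex functional is convex, which establishes~(i). No subtlety arises here beyond quoting the concavity of $S$.

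For part~(ii), the main ingredient is the \emph{linearity} of the map $\Gamma \mapsto \rho$ together with~(i) specialized to $v = 0$, in which case the inner functional of $F^\beta_\mathrm{DM}$ is exactly the Helmholtz functional at zero external potential, $\Omega^\beta_0(\Gamma) = \Tr\{\Gamma(H_0 + \beta^{-1}\log\Gamma)\}$. I would take $\rho_0, \rho_1 \in \densset$, $\lambda\in[0,1]$, and set $\rho_\lambda = (1-\lambda)\rho_0 + \lambda\rho_1$; by the convexity of $\densset$ (Lieb \cite{Lieb1983}) we have $\rho_\lambda\in\densset$. For any density matrices $\Gamma_0\mapsto\rho_0$ and $\Gamma_1\mapsto\rho_1$, the convex combination $\Gamma_\lambda = (1-\lambda)\Gamma_0 + \lambda\Gamma_1$ is again a density matrix, and by linearity of the density map it satisfies $\Gamma_\lambda\mapsto\rho_\lambda$. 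Hence $\Gamma_\lambda$ is admissible in the infimum defining $F^\beta_\mathrm{DM}(\rho_\lambda)$, and using convexity from~(i),
\begin{equation}
    F^\beta_\mathrm{DM}(\rho_\lambda) \leq \Omega^\beta_0(\Gamma_\lambda) \leq (1-\lambda)\Omega^\beta_0(\Gamma_0) + \lambda\Omega^\beta_0(\Gamma_1).
\end{equation}
Since this holds for all admissible $\Gamma_0,\Gamma_1$, I would take the infimum over each separately to obtain $F^\beta_\mathrm{DM}(\rho_\lambda) \leq (1-\lambda)F^\beta_\mathrm{DM}(\rho_0) + \lambda F^\beta_\mathrm{DM}(\rho_1)$, which is convexity on $\densset$.

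Finally, I would extend this to all of $\affspace$ using the convention $F^\beta_\mathrm{DM} = \infty$ on $\affspace\setminus\densset$: if either endpoint lies outside $\densset$ the right-hand side is $+\infty$ and the inequality is vacuous, so convexity on $\affspace$ follows. I do not expect a genuine obstacle in this proposition; the only point requiring care is the interchange of the infimum with the convex combination, which is handled cleanly by first fixing arbitrary trial states $\Gamma_0,\Gamma_1$ and passing to the infimum only at the end. The substantive input---concavity of the entropy---is imported as a known result, so the argument is essentially convexity bookkeeping built on the linearity of $\Gamma\mapsto\rho$.
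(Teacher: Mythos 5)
Your argument is correct and takes essentially the same route as the paper, which disposes of this proposition in a single sentence invoking exactly your two ingredients: concavity of the entropy \cite[Th.~12]{GiesbertzRuggenthaler2019}, so that $\Omega^\beta_v$ is linear-plus-convex in $\Gamma$, and linearity of $\Gamma\mapsto\rho$ for the infimum bookkeeping defining $F^\beta_\mathrm{DM}$. Your explicit trial-state computation and the $+\infty$ extension to $\affspace$ merely spell out what the paper leaves implicit.
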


\section{Relation between $v$-representability and non-empty subdifferential}\label{sec:v-rep, subdiff}
The relation between $v$-representability and non-empty subdifferential is usually discussed in mathematical presentations of DFT~\cite{eschrig2003-book,Penz-et-al-HKreview-partI,cances2023density}. This relation, together with some additional results from convex analysis, are important ingredients to show our main Theorem \ref{th:main}. Therefore, we will repeat the relation and the needed results in this section. 

For any proper ($f \neq \infty$) convex function $f: \affspace \to (-\infty, \infty]$ its subdifferential at a point $\rho \in \affspace$ is defined as the set 
\begin{equation}
    \partial f(\rho) = \{ u\in \affspace^* \mid  f(\rho)-f(\rho') \leq \langle u,\rho-\rho' \rangle \; \forall \rho'\in\affspace\}.
\end{equation}
Note, that if the inequality $f(\rho)-f(\rho') \leq \langle u,\rho-\rho'\rangle$ cannot be satisfied for any $u \in \affspace^*$ and all $\rho' \in \affspace$ then the subdifferential is empty. An element of $\partial f(\rho)$ is called a subgradient. Moreover, the function $f$ is Gâteaux differentiable at a point $\rho$ if the function $f$ is continuous at $\rho$ and if the subdifferential $\partial f(\rho)$ contains only a single element~\cite[Prop.~2.40]{Barbu-Precupanu}. In this case, this element is equal to the derivative. If the general convex function is replaced by the thermal universal functional (which is a proper convex functional by Proposition \ref{prop:convexity of Helmholtz and universal functional}), we get for an element $v \in -\partial F_\mathrm{DM}^\beta(\rho)$  and for all $\rho' \in \affspace$ the following inequality 
\begin{align}\label{eq:FDM-inequ}
    F_\mathrm{DM}^\beta(\rho)+\langle v, \rho \rangle \leq F_\mathrm{DM}^\beta(\rho')+\langle v,\rho'\rangle.
\end{align}
In particular, $\rho$ is a minimizer for the variation in Eq.~\eqref{eq: Omega-from-F}. Note that this is only a necessary condition for $v$-representability and not a sufficient one. The reason is that for $v$-representability of a density $\rho$ we additionally require that the variation over $\Gamma \mapsto \rho$ in Eq.~\eqref{eq:F_DM-const-search} has a minimum too, otherwise there is no Gibbs state behind the density.
In this section, we show that the subdifferential of $F^\beta_\mathrm{DM}(\rho)$ at any density $\rho \in \vrepset$ is non-empty. In the next section we then show that these densities indeed come from Gibbs states. Only the two result together then give $v$-representability for all densities in $\vrepset$. 

To show the existence of a subgradient for a density $\rho \in \vrepset$ we proceed as in \citet{sutter2024solution}. 
The only result that significantly changes compared to our previous work is the local boundedness of the thermal universal functional $F_\mathrm{DM}^\beta$.  

\begin{lemma}\label{lem:F-bound}
    There exist three constants $C_1, C_2, C_3 > 0$ such that for all $\rho \in \densset$ it holds
    \begin{equation}
        C_1 \leq F_\mathrm{DM}^\beta(\rho) \leq C_2 + C_3 \| \nabla \sqrt{\rho} \|_{L^2}^2.    
    \end{equation}
\end{lemma}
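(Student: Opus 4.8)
The plan is to prove the two inequalities separately, treating the entropic term — the genuinely new ingredient compared to the zero-temperature analysis of \cite{sutter2024solution} — with care in each direction.

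For the \emph{upper bound} I would exploit that $F^\beta_\mathrm{DM}(\rho)$ is an infimum over all density matrices with $\Gamma\mapsto\rho$, so that evaluating the objective at any single admissible trial state already yields an upper bound. Given $\rho\in\densset$, Theorem~\ref{th:N-rep and properties of densset}(ii) provides a wave function $\Psi\in\psiset$ with $\rho_\Psi=\rho$ and $T(\Psi)\leq C_1'+C_2'\|\nabla\sqrt{\rho}\|_{L^2}^2$, where $C_1',C_2'$ are the constants of that theorem. Taking the pure trial state $\Gamma=\langle\Psi,\cdot\rangle\Psi$, its entropy vanishes, so the regularizing term $\beta^{-1}\Tr\{\Gamma\log(\Gamma)\}$ drops out and $F^\beta_\mathrm{DM}(\rho)\leq\langle\Psi,H_0\Psi\rangle=T(\Psi)+\langle\Psi,W\Psi\rangle$. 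I would then control the interaction energy through the KLMN condition~\eqref{eq:KLNM condition for the interaction}, which gives $\langle\Psi,W\Psi\rangle\leq a\,T(\Psi)+b$, hence $\langle\Psi,H_0\Psi\rangle\leq(1+a)T(\Psi)+b$. Inserting the kinetic-energy estimate from Theorem~\ref{th:N-rep and properties of densset}(ii) produces the claimed bound with $C_2=(1+a)C_1'+b$ and $C_3=(1+a)C_2'$.

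For the \emph{lower bound} the naive strategy fails, and seeing why is the crux. One would like to bound energy and entropy separately: the energy is indeed controllable, since the mixed-state form of~\eqref{eq:KLNM condition for the interaction} gives $\Tr\{\Gamma H_0\}\geq(1-a)\Tr\{\Gamma(-\tfrac12\Delta)\}-b\geq-b$ because $a<1$ and the kinetic energy is nonnegative. However, the entropic contribution $\beta^{-1}\Tr\{\Gamma\log(\Gamma)\}=-\beta^{-1}S(\Gamma)$ is only known to be nonpositive and can be made arbitrarily negative for states spread over many levels, so a termwise estimate is hopeless. The key observation instead is that $F^\beta_\mathrm{DM}(\rho)$ minimizes the free-energy objective over the subset of density matrices with the prescribed density, so it is bounded below by the minimum over \emph{all} density matrices. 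By Proposition~\ref{prop: minimizer of Helmhotz functional/grand potential} applied to the potential $v=0\in H^{-1}(\T)$, this unconstrained minimum is attained at the Gibbs state $\Gamma_0=\e^{-\beta H_0}/Z(0)$ and equals the finite value $\Omega^\beta(0)=-\beta^{-1}\log(Z(0))$. Therefore $F^\beta_\mathrm{DM}(\rho)\geq\Omega^\beta(0)=:C_1$ uniformly in $\rho\in\densset$, which is precisely the Gibbs variational principle (equivalently, nonnegativity of the relative entropy of $\Gamma$ with respect to $\Gamma_0$).

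The main obstacle is exactly this entropic term in the lower bound: because the von Neumann entropy is unbounded above on an infinite-dimensional space, the free energy cannot be bounded below by estimating energy and entropy in isolation. The essential move is to recognize the desired lower bound as the global, density-unconstrained free-energy minimum and to invoke Proposition~\ref{prop: minimizer of Helmhotz functional/grand potential}, whose finiteness rests on $Z(0)<\infty$, i.e.\ on the particles being confined to the finite-volume torus. The remaining ingredients — the KLMN estimates and the insertion of the kinetic-energy bound of Theorem~\ref{th:N-rep and properties of densset} — are then routine.
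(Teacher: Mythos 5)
Your proof is correct and takes essentially the same route as the paper: the lower bound is exactly the paper's argument, namely $F^\beta_\mathrm{DM}(\rho)\geq\inf_\Gamma\Tr\{\Gamma(H_0+\beta^{-1}\log\Gamma)\}=\Omega^\beta(0)>-\infty$ via Proposition~\ref{prop: minimizer of Helmhotz functional/grand potential} at $v=0$, and the upper bound likewise proceeds by discarding the nonpositive entropic term. The only cosmetic difference is that where the paper cites \cite[Lem.~10]{sutter2024solution} for $\inf_{\Gamma\to\rho}\Tr\{\Gamma H_0\}\leq C_2+C_3\|\nabla\sqrt{\rho}\|_{L^2}^2$, you reconstruct that bound from the pure trial state of Theorem~\ref{th:N-rep and properties of densset}(ii) combined with the KLMN condition~\eqref{eq:KLNM condition for the interaction}, which is precisely the mechanism behind the cited lemma.
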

\begin{proof}
    For the lower bound of $F_\mathrm{DM}^\beta(\rho)$ we have
    \begin{align}
        F_\mathrm{DM}^\beta(\rho)\geq \inf_{\Gamma\in\DMset} \Tr \{\Gamma(H_0+\beta^{-1}\log(\Gamma))\}=\Omega^\beta(0),
    \end{align}
    which is greater than $-\infty$ by Proposition \ref{prop: minimizer of Helmhotz functional/grand potential}. For the upper bound we note that the entropic term $-\beta^{-1} S(\Gamma)$ is negative. Hence we get the following inequality
    \begin{align}
        F_\mathrm{DM}^\beta(\rho)\leq \inf_{\substack{\Gamma\in\DMset \\ \Gamma \mapsto \rho}} \Tr \{\Gamma H_0\}\leq C_2 + C_3 \| \nabla \sqrt{\rho} \|_{L^2}^2,
    \end{align}
    where we used a previous result~\cite[Lem.~10]{sutter2024solution} for the second inequality.  
\end{proof}
\begin{theorem}\label{th:F-nonempty-subdiff}
    At any $\rho \in \vrepset$ the thermal universal functional $F_\mathrm{DM}^\beta$ is continuous and the subdifferential $\partial F_\mathrm{DM}^\beta(\rho) \subset \affspace^*$ is non-empty.
\end{theorem}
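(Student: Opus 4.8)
The plan is to establish continuity and non-emptiness of the subdifferential separately, since for convex functions these two properties are tightly linked through local boundedness. The key tool from convex analysis is that a convex function on a Banach space which is bounded above on a neighborhood of a point is automatically continuous there, and that a convex function continuous at an interior point of its domain has a non-empty subdifferential there (see \cite[Prop.~2.36]{Barbu-Precupanu} or analogous statements). The strategy therefore reduces to verifying the hypotheses of these abstract results at every $\rho\in\vrepset$, where the newly proven Lemma~\ref{lem:F-bound} supplies exactly the local upper bound we need.

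First I would observe that $\vrepset$ is open in $\affspace$: a density $\rho$ with $\rho(x)>0$ for all $x\in\T$ is, by the continuous embedding $H^1(\T)\hookrightarrow C^0(\T)$ from Theorem~\ref{th:N-rep and properties of densset}(iii), bounded below by some $\delta>0$, and any $\rho'$ close to $\rho$ in the $H^1$-norm is then uniformly close in the sup-norm, hence still strictly positive and with the same integral $N$ (which is preserved since $\affspace$ fixes $\smallint\rho=N$). Thus a whole $H^1$-ball around $\rho$ lies in $\vrepset\subset\densset$. Next I would use Lemma~\ref{lem:F-bound}, which gives $F_\mathrm{DM}^\beta(\rho')\leq C_2+C_3\|\nabla\sqrt{\rho'}\|_{L^2}^2$ for all $\rho'\in\densset$, and argue that the right-hand side stays bounded on this ball. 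This last point requires controlling $\|\nabla\sqrt{\rho'}\|_{L^2}$ uniformly in terms of $\|\rho'\|_{H^1}$ on the neighborhood: since $\rho'$ is bounded below by $\delta/2$ there, one has $\nabla\sqrt{\rho'}=\nabla\rho'/(2\sqrt{\rho'})$, so $\|\nabla\sqrt{\rho'}\|_{L^2}\leq \|\nabla\rho'\|_{L^2}/(2\sqrt{\delta/2})$, which is finite and uniformly bounded on the ball. This delivers a uniform upper bound on $F_\mathrm{DM}^\beta$ over the neighborhood.

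With local boundedness above at hand, continuity of the convex functional $F_\mathrm{DM}^\beta$ at $\rho$ follows from the standard convex-analysis result, and then the existence of a subgradient $-v\in\partial F_\mathrm{DM}^\beta(\rho)$ follows because a convex function continuous at a point of the interior of its effective domain admits a non-empty subdifferential there. Since $\rho$ lies in the open set $\vrepset$ on which $F_\mathrm{DM}^\beta$ is finite, $\rho$ is interior to the domain and the conclusion $\partial F_\mathrm{DM}^\beta(\rho)\neq\emptyset$ holds.

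The main obstacle I anticipate is the uniform control of $\|\nabla\sqrt{\rho'}\|_{L^2}$ across the neighborhood, rather than just pointwise finiteness: Lemma~\ref{lem:F-bound} gives a bound at each fixed density, but to conclude continuity one needs the bound to be uniform over an $H^1$-ball. The positivity lower bound $\rho'\geq\delta/2$ coming from the $H^1\hookrightarrow C^0$ embedding is what makes this work, since it prevents the quotient $\nabla\rho'/(2\sqrt{\rho'})$ from blowing up; without strict positivity (i.e.\ on the boundary of $\vrepset$) this control fails, which is precisely why continuity — and hence the whole argument — breaks down on $\densset\setminus\vrepset$. I would make sure to state the openness of $\vrepset$ and the uniform positivity carefully, as these are the technical linchpins that the abstract convexity machinery rests upon.
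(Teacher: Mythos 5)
Your proposal is correct and follows essentially the same route as the paper: the paper's proof (deferred to \citet[Th.~15]{sutter2024solution}) rests exactly on the openness of $\vrepset$ in the $H^1$ topology via the embedding $H^1(\T)\hookrightarrow C^0(\T)$, the uniform local upper bound supplied by Lemma~\ref{lem:F-bound} together with the uniform positivity $\rho'\geq\delta/2$ controlling $\nabla\sqrt{\rho'}=\nabla\rho'/(2\sqrt{\rho'})$, and the standard convex-analysis facts that a proper convex function bounded above on a neighborhood is continuous there and that continuity at an interior point of the effective domain yields a non-empty subdifferential. Your identification of the uniform (rather than pointwise) bound on $\|\nabla\sqrt{\rho'}\|_{L^2}$ as the technical linchpin, and of its failure on $\densset\setminus\vrepset$, matches the paper's reasoning precisely.
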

The proof is the same as in \citet[Th.~15]{sutter2024solution}.
The result relies critically on the embedding $H^1\hookrightarrow L^\infty$ from Theorem~\ref{th:N-rep and properties of densset}\ref{item:embeddings} that can be used to show that $\vrepset$ is an open set~\cite[Lem.~14]{sutter2024solution}. This embedding only holds true in one dimension and one further needs the torus setting with a non-vanishing minimum of $\rho$ in order to show that $F_\mathrm{DM}^\beta$ is bounded above. Both results combined then yield the desired result by using powerful theorems from convex analysis~\cite[Th.~2.14 and Prop.~2.36]{Barbu-Precupanu}. We note that the same result can be established also on the open interval with Neumann boundary conditions~\cite{corso2025rigorous}. We can now also understand the origin of the distributional potentials. We have $\vrepset \subset \affspace \subset H^1(\T)$, so the density domain is limited to a subset of the Sobolev space $H^1(\T)$ with dual space $H^{-1}(\T)\subset\affspace^*$ that contains distributions. 
On the other hand, a potential is always from the subdifferential of the thermal universal functional and the subdifferential itself is a subset of the dual space $\affspace^*$.

\section{Existence of the Minimum}\label{sec: minimum}

As discussed in the previous section, we do not only need a non-empty subdifferential of $F_\mathrm{DM}^\beta(\rho)$ to get $v$-representability but also that the minimum in the definition of $F_\mathrm{DM}^\beta(\rho)$ (Eq.~\eqref{eq:F_DM-const-search}) exists. This minimizer is then the Gibbs state $\Gamma_v$, where $v$ is just the potential that yields $\rho$. In the following we show this for all $\rho\in \affspace_{>0}$. To do so, we make use of the relative entropy which for two non-negative trace class operators $A$, $B$ with $\ker B \subset \ker A$ is defined as
\begin{align}\label{def: relative entropy}
    S(A | B) = \Tr\{A(\log A- \log B)+B-A\}.
\end{align}
The relative entropy is always non-negative, $S(A |B) \geq 0$ and $S(A|B)=0$ if and only if $A=B$~\cite{Lindblad1973,Falk1970}. An important property of the relative entropy is its lower semi-continuity. 
\begin{lemma}[\citet{lami2023attainability}, Lemma 4]
\label{lemma: lower semi-continuity of the relative entropy}
    The relative entropy $(A , B) \mapsto S(A | B)$ is jointly lower semi-continuous with respect to the weak-* topology.
\end{lemma}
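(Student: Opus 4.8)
The strategy I would follow rests on a single structural fact from convex analysis: a pointwise supremum of weak-$*$ lower semi-continuous functions is again weak-$*$ lower semi-continuous. The task therefore reduces to exhibiting $S(A\,|\,B)$, as defined in Eq.~\eqref{def: relative entropy}, as such a supremum over a family of functionals of the pair $(A,B)$, each individually weak-$*$ lower semi-continuous. The natural ingredient is a variational (Legendre--Fenchel type) representation of the relative entropy, of the schematic form $S(A\,|\,B)=\Tr B+\sup_{X>0}\{\Tr(A\log X)-\Tr(BX)\}$ over positive test operators $X$, whose formal optimiser $X=AB^{-1}$ reproduces the defining expression. The obstacle with using this formula verbatim is that $\log X$ is unbounded and the bare trace functional $X\mapsto\Tr X=\Tr(X\cdot I)$ is only weak-$*$ lower semi-continuous rather than continuous, the identity not being compact; recall that the weak-$*$ topology pairs trace-class operators only against the \emph{compact} operators.

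To circumvent this I would pass to finite-dimensional compressions. For a finite-rank orthogonal projection $P$ the compression map $(A,B)\mapsto(PAP,PBP)$ is weak-$*$-to-norm continuous, since for finite-rank $P$ and bounded $Y$ one has $\Tr(PAP\,Y)=\Tr(A\,PYP)$ with $PYP$ compact, and on the finite-dimensional range of $P$ weak convergence coincides with norm convergence. Post-composing with the finite-dimensional relative entropy $S_P(A\,|\,B):=S(PAP\,|\,PBP)$, which is jointly lower semi-continuous in its matrix arguments, yields a jointly weak-$*$ lower semi-continuous functional. Monotonicity of the extended relative entropy under the completely positive, trace-non-increasing map $X\mapsto PXP$ gives $S_P(A\,|\,B)\le S(A\,|\,B)$, while a standard approximation argument shows $S(A\,|\,B)=\sup_P S_P(A\,|\,B)$, the supremum taken over the net of finite-rank projections ordered by inclusion. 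Thus $S$ is a supremum of weak-$*$ lower semi-continuous functionals and is itself weak-$*$ lower semi-continuous.

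The step I expect to be the crux is the exactness of this representation, i.e.\ the identity $S(A\,|\,B)=\sup_P S_P(A\,|\,B)$. The bound $\sup_P S_P(A\,|\,B)\le S(A\,|\,B)$ is immediate from monotonicity, but the reverse inequality requires controlling the behaviour of $\log B$ on the kernel of $B$ (this is where the support hypothesis $\ker B\subset\ker A$ enters) and verifying that the monotone net of compressions increases \emph{up to} the full value rather than to a strictly smaller limit; the correction terms $\Tr B-\Tr A$ in Eq.~\eqref{def: relative entropy}, which render the extended relative entropy monotone also under trace-decreasing maps, must be tracked throughout. This convergence analysis is precisely the content established in \citet{lami2023attainability}, which we invoke.
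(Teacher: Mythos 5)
The paper itself offers no proof of this lemma: it is imported verbatim as Lemma~4 of \citet{lami2023attainability}, so the only meaningful comparison is with that cited source, and your sketch is essentially a correct reconstruction of its argument. The route you take — write the extended relative entropy of Eq.~\eqref{def: relative entropy} as $S(A\,|\,B)=\sup_P S(PAP\,|\,PBP)$ over finite-rank projections, observe that $(A,B)\mapsto(PAP,PBP)$ is weak-$*$-to-norm continuous (your justification $\Tr\{PAP\,Y\}=\Tr\{A\,PYP\}$ with $PYP$ compact is exactly right), compose with the jointly lower semi-continuous finite-dimensional relative entropy, and conclude via the fact that a pointwise supremum of weak-$*$ lower semi-continuous functionals is weak-$*$ lower semi-continuous — is the standard proof and the one underlying the citation; the opening Legendre--Fenchel formula is correctly diagnosed as unusable in this topology and discarding it leaves no gap. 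One caveat: for the crux identity $S(A\,|\,B)=\sup_P S(PAP\,|\,PBP)$ you invoke \citet{lami2023attainability} itself, which flirts with circularity since that is the very paper whose Lemma~4 you are proving. That identity is in fact classical and independent of the lower semi-continuity statement: it goes back to Lindblad's 1974 construction of the relative entropy in infinite dimensions, where the compressed values are shown to increase monotonically to the full value along nets $P\uparrow I$, the correction terms $\Tr\{B-A\}$ being precisely what makes the compression map trace-non-increasing-monotone as you note. With that reference substituted, the proposal is sound and coincides in strategy with the proof behind the paper's citation.
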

The next lemma relates the Helmholtz free energy and the relative entropy. Note that originally we only defined the Helmholtz free energy $\Omega^\beta_v(\Gamma)$ for $\Gamma\in\DMset$. However, in the same manner, it can be defined for any non-negative trace-class operator with finite entropy. Recall that $\Gamma_v$ is the Gibbs state corresponding to a potential $v$ and is given by $\Gamma_v=\e^{-\beta H_v}/Z(v)$. Its kernel is trivial and thus $S(\cdot | \Gamma_v)$ is well-defined.
\begin{lemma}[\citet{GiesbertzRuggenthaler2019}]
\label{lemma: relation relative entropy and Helmholtz functional}
    $S(\Gamma | \Gamma_v)+\Tr\{\Gamma- \Gamma_v\}=\beta \Omega^\beta_v(\Gamma)- \beta \Omega^\beta_v(\Gamma_v)\Tr\{\Gamma\}$.
\end{lemma}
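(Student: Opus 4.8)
The plan is to verify the identity by direct computation, expanding both sides and inserting the explicit Gibbs-state form of $\Gamma_v$; since everything is purely algebraic in the traces $\Tr\{\Gamma\log\Gamma\}$, $\Tr\{\Gamma H_v\}$ and $\Tr\{\Gamma\}$, the statement reduces to matching these quantities. First I would expand the left-hand side. By the definition of the relative entropy in Eq.~\eqref{def: relative entropy}, $S(\Gamma \mid \Gamma_v) = \Tr\{\Gamma(\log\Gamma - \log\Gamma_v) + \Gamma_v - \Gamma\}$, so adding $\Tr\{\Gamma - \Gamma_v\}$ cancels the linear $\Gamma_v - \Gamma$ contribution and leaves
\begin{equation}
    S(\Gamma \mid \Gamma_v) + \Tr\{\Gamma - \Gamma_v\} = \Tr\{\Gamma\log\Gamma\} - \Tr\{\Gamma\log\Gamma_v\}.
\end{equation}

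Next I would substitute the logarithm of the Gibbs state. Because $\Gamma_v = \e^{-\beta H_v}/Z(v)$ is strictly positive, its logarithm $\log\Gamma_v = -\beta H_v - \log(Z(v))\,\id$ is well-defined and the kernel condition $\ker\Gamma_v \subset \ker\Gamma$ required by the relative entropy holds trivially. This yields $-\Tr\{\Gamma\log\Gamma_v\} = \beta\Tr\{\Gamma H_v\} + \log(Z(v))\,\Tr\{\Gamma\}$, so that the left-hand side equals $\Tr\{\Gamma\log\Gamma\} + \beta\Tr\{\Gamma H_v\} + \log(Z(v))\,\Tr\{\Gamma\}$.

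Finally I would expand the right-hand side. By the definition of the Helmholtz functional one has $\beta\Omega^\beta_v(\Gamma) = \Tr\{\Gamma\log\Gamma\} + \beta\Tr\{\Gamma H_v\}$, while Proposition~\ref{prop: minimizer of Helmhotz functional/grand potential} together with Eq.~\eqref{eq: Helmholtz free energy} gives $\Omega^\beta_v(\Gamma_v) = \Omega^\beta(v) = -\beta^{-1}\log(Z(v))$, hence $\beta\Omega^\beta_v(\Gamma_v)\Tr\{\Gamma\} = -\log(Z(v))\,\Tr\{\Gamma\}$. Subtracting, the right-hand side becomes $\Tr\{\Gamma\log\Gamma\} + \beta\Tr\{\Gamma H_v\} + \log(Z(v))\,\Tr\{\Gamma\}$, which coincides with the left-hand side computed above.

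The main obstacle is not the algebra, which is elementary, but the functional-analytic bookkeeping that justifies it: one must check that the traces appearing are individually well-defined (or that only well-defined combinations are manipulated), that linearity and cyclicity of the trace may be applied when the unbounded operator $H_v$ is paired against the trace-class $\Gamma$, and that $\log\Gamma$ is treated on the appropriate domain. The strict positivity of $\Gamma_v$ removes every difficulty on the relative-entropy side, so the remaining care concerns only the $\Tr\{\Gamma H_v\}$ and $\Tr\{\Gamma\log\Gamma\}$ terms.
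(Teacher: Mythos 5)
Your computation is correct and is exactly the standard derivation: expanding $S(\Gamma\,|\,\Gamma_v)+\Tr\{\Gamma-\Gamma_v\}$, inserting $\log\Gamma_v=-\beta H_v-\log(Z(v))\,\id$ (with the kernel condition trivially satisfied by the strictly positive Gibbs state), and matching against $\beta\Omega^\beta_v(\Gamma)-\beta\Omega^\beta_v(\Gamma_v)\Tr\{\Gamma\}$ using $\Omega^\beta_v(\Gamma_v)=-\beta^{-1}\log(Z(v))$ from Proposition~\ref{prop: minimizer of Helmhotz functional/grand potential}. The paper itself gives no proof but delegates to \citet{GiesbertzRuggenthaler2019}, whose argument is this same algebra; your closing remark correctly identifies the only remaining care (splitting traces involving the unbounded $H_v$ and possibly infinite $\Tr\{\Gamma\log\Gamma\}$), which is harmless here since $H_v$ is bounded below via the KLMN condition and the identity is applied where these quantities are finite, the relative-entropy side otherwise serving as the well-defined meaning of the combination.
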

Note that the left-hand side is well-defined for all non-negative trace-class operators $\Gamma$ and thus Lemma \ref{lemma: relation relative entropy and Helmholtz functional} extends the domain of $\Omega_v^\beta(\cdot)$ to density matrices with infinite entropy.
Since the left-hand side in Lemma \ref{lemma: relation relative entropy and Helmholtz functional} consists of two weak-* lower semi-continuous  functions (the trace norm $\Tr|\Gamma|$ is decreasing under weak-* limits and equal to $\Tr\{\Gamma\}$ since $\Gamma\geq 0$), we get that the Helmholtz free energy together with a term proportional to $\Tr\{\cdot\}$ (with positive factor) is lower semi-continuous with respect to the weak-* topology. In particular, we get the following lemma.
\begin{lemma}
\label{lemma: assumtion on the energy}
    For $\xi > 1$ the mapping $\Gamma\mapsto\Omega^\beta_v(\Gamma)-\xi \Omega^\beta_v(\Gamma_v)\Tr\{\Gamma\}$ is weak-* lower semi-continuous if $\Omega^\beta_v(\Gamma_v)\leq(\beta(\xi-1))^{-1}$. For $\xi<1$ the function is weak-* lower semi-continuous if $\Omega^\beta_v(\Gamma_v)\geq (\beta(\xi-1))^{-1}$.
\end{lemma}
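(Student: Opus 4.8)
The plan is to reduce the entire statement to the relative-entropy identity of Lemma~\ref{lemma: relation relative entropy and Helmholtz functional} and then read off the precise sign condition under which the two available lower-semicontinuity facts combine. First I would solve that identity for $\Omega^\beta_v(\Gamma)$, obtaining
\begin{equation*}
\Omega^\beta_v(\Gamma) = \beta^{-1} S(\Gamma \mid \Gamma_v) + \beta^{-1}\Tr\{\Gamma\} - \beta^{-1}\Tr\{\Gamma_v\} + \Omega^\beta_v(\Gamma_v)\Tr\{\Gamma\},
\end{equation*}
and then subtract the term $\xi\,\Omega^\beta_v(\Gamma_v)\Tr\{\Gamma\}$ to isolate the dependence on $\Gamma$:
\begin{equation*}
\Omega^\beta_v(\Gamma) - \xi\,\Omega^\beta_v(\Gamma_v)\Tr\{\Gamma\} = \beta^{-1} S(\Gamma \mid \Gamma_v) + \bigl(\beta^{-1} + (1-\xi)\Omega^\beta_v(\Gamma_v)\bigr)\Tr\{\Gamma\} - \beta^{-1}\Tr\{\Gamma_v\}.
\end{equation*}
The point of this rewriting is that the right-hand side is now a constant plus exactly two $\Gamma$-dependent contributions, each of which is individually weak-* lower semi-continuous.

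Next I would invoke the two ingredients. The final term $-\beta^{-1}\Tr\{\Gamma_v\}$ is independent of $\Gamma$ and hence irrelevant. By Lemma~\ref{lemma: lower semi-continuity of the relative entropy}, fixing the second slot at $\Gamma_v$, the map $\Gamma\mapsto S(\Gamma\mid\Gamma_v)$ is weak-* lower semi-continuous, and multiplying by the positive constant $\beta^{-1}$ preserves this. The map $\Gamma\mapsto\Tr\{\Gamma\}$ is the trace norm restricted to non-negative trace-class operators, which is weak-* lower semi-continuous because the norm is decreasing under weak-* limits. Since any sum of weak-* lower semi-continuous functions with nonnegative coefficients is again weak-* lower semi-continuous, it suffices that the coefficient $\beta^{-1} + (1-\xi)\Omega^\beta_v(\Gamma_v)$ of $\Tr\{\Gamma\}$ be nonnegative.

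The concluding step is to translate this nonnegativity into the two stated regimes. For $\xi>1$ one has $1-\xi<0$, so the coefficient is nonnegative exactly when $\Omega^\beta_v(\Gamma_v)\leq \beta^{-1}/(\xi-1) = (\beta(\xi-1))^{-1}$; for $\xi<1$ one has $1-\xi>0$, and dividing the inequality $\beta^{-1}+(1-\xi)\Omega^\beta_v(\Gamma_v)\geq 0$ by the positive number $1-\xi$ reverses it into $\Omega^\beta_v(\Gamma_v)\geq (\beta(\xi-1))^{-1}$, where the bound is now negative. These are precisely the hypotheses in the statement. The only genuinely delicate point — and what I expect to be the main obstacle — is the sign of the coefficient of $\Tr\{\Gamma\}$: because $\Tr\{\Gamma\}$ is merely lower semi-continuous rather than continuous in the weak-* topology, a negative coefficient would render that contribution upper semi-continuous and break the argument, which is exactly why the bound on $\Omega^\beta_v(\Gamma_v)$ cannot be dropped.
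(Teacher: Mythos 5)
Your proposal is correct and follows essentially the same route as the paper: you rearrange the relative-entropy identity of Lemma~\ref{lemma: relation relative entropy and Helmholtz functional} into exactly the paper's decomposition $\beta^{-1}S(\Gamma\mid\Gamma_v) + \bigl(\beta^{-1}+(1-\xi)\Omega^\beta_v(\Gamma_v)\bigr)\Tr\{\Gamma\} - \beta^{-1}\Tr\{\Gamma_v\}$, invoke Lemma~\ref{lemma: lower semi-continuity of the relative entropy} for the entropy term and the weak-* lower semi-continuity of the trace norm for the second, and reduce the claim to nonnegativity of the coefficient of $\Tr\{\Gamma\}$, which yields precisely the two stated regimes. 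Your closing remark on why a negative coefficient would break lower semi-continuity is a correct reading of the hypothesis, matching the paper's ``Solving for $\Omega^\beta_v(\Gamma_v)$'' step.
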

\begin{proof}
    From Lemma \ref{lemma: relation relative entropy and Helmholtz functional} we get
    \begin{equation}
        \Omega^\beta_v(\Gamma)-\xi \Omega^\beta_v(\Gamma_v) \Tr\{\Gamma\}=\frac{1}{\beta}S(\Gamma | \Gamma_v)-\frac{1}{\beta}\Tr\{\Gamma_v\} + \Tr\{\Gamma\}\left(\frac{1}{\beta}+(1-\xi)\Omega^\beta_v(\Gamma_v) \right).
    \end{equation}
    By Lemma \ref{lemma: lower semi-continuity of the relative entropy}  we know that the first term $\beta^{-1}S(\Gamma | \Gamma_v)$ is weak-* lower semi-continuous. The second term $-\beta^{-1}\Tr\{\Gamma_v\}$ is just a constant. The last term is weak-* lower semi-continuous if $\beta^{-1}+(1-\xi )\Omega^\beta_v(\Gamma_v)\geq 0$. Solving for $\Omega^\beta_v(\Gamma_v)$ proves the lemma.
\end{proof} 

\begin{theorem}\label{th: existence of the minimizer for the universal functional}
    For every $\rho \in \vrepset$ and all potentials $v\in -\partial F_\mathrm{DM}^\beta(\rho)\subset\affspace^*$, it holds that the corresponding Gibbs state $\Gamma_v$ yields the density $\rho$ and is the minimizer in Eq.~\eqref{eq:F_DM-const-search}, i.e., $F_\mathrm{DM}^\beta(\rho)= \Tr\{\Gamma_v(H_0+\beta^{-1}\log \Gamma_v)\}$.
\end{theorem}
\begin{proof}
For a density $\rho \in \vrepset$ we pick $v \in -\partial F_\mathrm{DM}^\beta(\rho)$ which we know to exist by Theorem \ref{th:F-nonempty-subdiff}.  By definition of $F_\mathrm{DM}^\beta(\rho)$ there is a sequence of density matrices $\Gamma_n$ each yielding the density $\rho$ and such that $\Tr \{\Gamma_n(H_0+\beta^{-1}\log \Gamma_n)\} \to F^\beta_\mathrm{DM}(\rho)$. Since all $\Gamma_n$ are contained in a ball of radius $1$ we get by the Banach--Alaoglu theorem \cite[Th.~IV.21]{reed-simon-1} that there is a subsequence (in the following also denoted by $\Gamma_n$) and a trace class operator $\Gamma_0$ such that $\Gamma_n \overset{\ast}{\rightharpoonup} \Gamma_0$. Moreover, we know that $\Gamma_0$ is non-negative and that $\Tr \{\Gamma_0\} \leq 1$. To prove the theorem, we show that $\Gamma_0$ is a minimizer in Eq.~\eqref{eq:F_DM-const-search} that, together with  the discussion after Eq.~\eqref{eq:FDM-inequ}, already shows that $\Gamma_0$ is equal to the Gibbs state $\Gamma_v\in\DMset$. This means that $\Gamma_0$ is normalized to $1$, that $F_\mathrm{DM}^\beta(\rho)=\Tr \{\Gamma_0(H_0+\beta^{-1}\log \Gamma_0)\}$, and that it yields the density $\rho$. We start by showing $\Tr \{\Gamma_0\}=1$.\\
For simplicity, we call the minimum of the Helmholtz free energy $\Omega^\beta=\Omega^\beta_v(\Gamma_v)=\inf_{\Gamma \in \DMset} \Tr \{\Gamma(H_v+\beta^{-1}\log \Gamma)\}$. Because $v \in -\partial F^\beta_\mathrm{DM}(\rho)$ and $\lim_{n \to \infty} \Tr\{\Gamma_n\}=1$ we have 
\begin{equation}
\label{achievment of infimum: lower semi continuity for limit for the canonical ensemble}
\begin{aligned}
    \Omega^\beta-\xi \Omega^\beta&=F^\beta_\mathrm{DM}(\rho)+\innerproduct{v, \rho}-\xi \Omega^\beta=\lim_{n \to \infty}\Tr \{\Gamma_n(H_v+\beta^{-1}\log \Gamma_n)\}-\xi \Omega^\beta \Tr\{\Gamma_n\} \\
    &\geq \Tr \{\Gamma_0(H_v+\beta^{-1}\log \Gamma_0)\}-\xi \Omega^\beta \Tr\{\Gamma_0\},
\end{aligned}
\end{equation}
where we have used the fact that the function $\Gamma\mapsto\Omega^\beta_v(\Gamma)-\xi \Omega^\beta \Tr\{\Gamma\}$ is weak-* lower semi-continuous due to Lemma \ref{lemma: assumtion on the energy}. For the condition of the lemma to be fulfilled we set $\xi=(\beta\Omega^\beta)^{-1}+1$. This expression is well-defined, since we can choose a representative of $[v] \in \affspace^*$ for which $\Omega^\beta_v(\Gamma_v)$ is non-zero. Let $c$ be the constant such that $\Tr \{c \Gamma_0\}=1$. (Note that the constant $c$ only exist if $\Gamma_0 \neq 0$. If $\Gamma_0=0$ then the right-hand side of Eq.~\eqref{achievment of infimum: lower semi continuity for limit  for the canonical ensemble} is $0$ while the left-hand side is $-\beta^{-1}$. Hence $\Gamma_0=0$ is not possible.) Since $\Tr \{\Gamma_0\} \leq 1$ it holds $c \geq 1$. We compute
\begin{equation}
\begin{aligned}
    \Omega^\beta-\xi \Omega^\beta \leq \Omega^\beta_v(c\Gamma_0)-\xi \Omega^\beta \Tr\{c \Gamma_0\}&=\Tr \{c\Gamma_0(H_v+\beta^{-1}\log(c\Gamma_0))\}-\xi \Omega^\beta \Tr\{c \Gamma_0\} \\
    &=c\Tr \{\Gamma_0(H_v+\beta^{-1}\log \Gamma_0)\}+\beta^{-1}\Tr \{c\Gamma_0\} \log c-c\xi \Omega^\beta \Tr\{\Gamma_0\}  \\
    &=c(\Tr \{\Gamma_0(H_v+\beta^{-1}\log \Gamma_0)\}-\xi \Omega^\beta \Tr\{\Gamma_0\})+\beta^{-1} \log c  \\
    &\leq c (\Omega^\beta-\xi \Omega^\beta)+\beta^{-1}\log c,
\end{aligned}
\end{equation}
where we used the variational principle for $\Omega^\beta\leq\Omega^\beta_v(c\Gamma_0)$, the fact that $\Tr \{c \Gamma_0\}=1$, and Eq.~\eqref{achievment of infimum: lower semi continuity for limit for the canonical ensemble}. This inequality can be transformed into
\begin{align}
    \label{inequ: achievement of infimum}
    c\Omega^\beta-c\xi \Omega^\beta+\beta^{-1}\log c -\Omega^\beta+\xi \Omega^\beta \geq 0.
\end{align}
 It is clear that we have equality if $c=1$. 
The function $f(c)=c\Omega^\beta-c\xi \Omega^\beta+\beta^{-1}\log c -\Omega^\beta+\xi \Omega^\beta$ defined on $[1, \infty)$ has $f(1)=0$ and the derivative
\begin{equation}
f'(c)=\Omega^\beta - \xi \Omega^\beta + \frac{\beta^{-1}}{c}=-\beta^{-1}+\frac{\beta^{-1}}{c}\begin{cases}
    =0 \quad& c=1\\
    <0 \quad& c>1,
\end{cases}
\end{equation}
where we used $\xi=(\beta\Omega^\beta)^{-1}+1$. This means that $f(c)$ is strictly decreasing on $[1,\infty)$ with a unique maximum at $c=1$. Hence the inequality \eqref{inequ: achievement of infimum} can only be satisfied if $c=1$ and we conclude that $\Tr \{\Gamma_0\}=1$.\\
Now that we know that the weak-* limit $\Gamma_0$ has trace norm equal to $1$ we can show that $\Gamma_0 \mapsto \rho$. For this we use the same idea as \citet{Lieb1983} and show that for any $f \in L^\infty(\T)$ we have $\Tr \{\Gamma_n M_f\} \to \Tr \{\Gamma_0 M_f\}$, where $M_f$ denotes pointwise multiplication by $f$. This means that the densities of $\Gamma_n$ converge weakly to the density of $\Gamma_0$ (in $L^1$) and by the uniqueness of weak limits we get $\Gamma_0 \mapsto \rho$. To do this, let $\sum _j \lambda_j \ket{\psi_j} \bra{\psi_j}$ be the spectral decomposition of $\Gamma_0$. Since $\Tr \{\Gamma_0\}=1$ there is for any $\epsilon>0$ an $N_\epsilon \in \N$ such that 
\begin{equation}
    \sum \limits_{j=1}^{N_\epsilon} \lambda_j > 1- \epsilon.
\end{equation}
We define the projection 
\begin{equation}
    P=\sum \limits_{j=1}^{N_\epsilon} \ket{\psi_j}\bra{\psi_j}.
\end{equation}
The operator $P$ is compact and since $\Gamma_n \overset{\ast}{\rightharpoonup} \Gamma_0$ is trace class there exists an $N_P$ such that $\abs{\Tr \{\Gamma_n P\}-\Tr \{\Gamma_0 P\}}< \epsilon$ for all $n>N_P$. It follows that for $n>N_P$ we have $\Tr \{\Gamma_nP\}>\Tr \{\Gamma_0 P\}-\epsilon > 1- 2 \epsilon$ and thus $\Tr \{\Gamma_n(1-P)\}<2 \epsilon$. We thus get that $\abs{\Tr \{(\Gamma_n-\Gamma_0)(1-P)\}}<2 \epsilon$ for large enough $n$ and we can write
\begin{equation}
    \Tr \{(\Gamma_n-\Gamma_0)M_f\}=\Tr \{(\Gamma_n-\Gamma_0)PM_f\}+\Tr \{(\Gamma_n-\Gamma_0)(1-P)M_f\}.
\end{equation}
The first term goes to $0$ since $PM_f$ is a compact operator. The second term is contained in $[-C\epsilon, C\epsilon]$ for $n$ large enough where the constant $C$ depends only on $f$. Since $\epsilon$ can be arbitrarily small we conclude that $\Tr \{(\Gamma_n-\Gamma_0)M_f\} \to 0$ and hence $\Gamma_0 \mapsto \rho$. By Eq.~\eqref{achievment of infimum: lower semi continuity for limit for the canonical ensemble} we get $F_\mathrm{DM}^\beta(\rho)=\Tr \{\Gamma_0(H_0+\beta^{-1}\log \Gamma_0)\}$ which concludes the proof. 
\end{proof}

\begin{remark}
\label{rmk: v-rep and subdiff equivalence}
    Note that we never use $\rho>0$ and that we can also replace the condition $\rho \in \vrepset$ by the condition that the subdifferential of $F_\mathrm{DM}^\beta$ at $\rho$ is non-empty.
\end{remark}

If we combine Theorem~\ref{th:F-nonempty-subdiff} and Theorem~\ref{th: existence of the minimizer for the universal functional}, we get that $\rho \in \vrepset$ is a sufficient condition for $v$-representability by a Gibbs state, which is the first part of Theorem~\ref{th:main}. The full proof for Theorem~\ref{th:main} can be found in Section~\ref{sec:summary}.

\section{Closing the Circle}\label{sec:circle}
So far we showed that each density in $\vrepset$ is $v$-representable by a Gibbs state. In this section we show that each $v$-representable density is strictly positive and thus contained in $\vrepset$. Consequently, the restriction to strictly positive densities in order to show $v$-representability of a density is necessary. In order to show this, observe that at elevated temperatures, all excited states of the Hamiltonian contribute to the Gibbs state (and hence also to the density) and since these states form an orthonormal basis, it is not possible that the density vanishes somewhere. We formulate this result in the following theorem that closes the circle from strictly positive densities in $\affspace_{>0}$ to representing potentials in $\affspace^*$ and back to a strictly positive density of the corresponding Gibbs state.
\begin{theorem}
\label{th: Gibbs state density non-zero}
    The density of any Gibbs state is strictly positive.
\end{theorem}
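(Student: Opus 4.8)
The plan is to diagonalize the Gibbs state and reduce the claim to a statement about completeness of the eigenbasis in the form domain. Since the torus $\T$ is compact, the free Laplacian has compact resolvent, and by the KLMN bound~\eqref{eq:KLNM condition for the interaction} the perturbation $W+V$ is relatively form bounded with relative bound $<1$; this preserves discreteness of the spectrum (consistently with $Z(v)<\infty$ from Proposition~\ref{prop: minimizer of Helmhotz functional/grand potential}). Hence $H_v$ has purely discrete spectrum $E_k\to\infty$ with normalized eigenfunctions $\{\Psi_k\}$ forming an orthonormal basis of $\H$, and $\Gamma_v=\sum_k w_k\ket{\Psi_k}\bra{\Psi_k}$ with strictly positive weights $w_k=\e^{-\beta E_k}/Z(v)$. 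The Gibbs density is therefore $\rho=\sum_k w_k\rho_{\Psi_k}$, a sum of non-negative functions with strictly positive coefficients. Consequently, if $\rho(x_0)=0$ for some $x_0\in\T$, then necessarily $\rho_{\Psi_k}(x_0)=0$ for every $k$.

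Next I would encode this vanishing as a linear constraint. By Eq.~\eqref{eq:def-dens}, $\rho_{\Psi_k}(x_0)=0$ means that $\Psi_k$ vanishes on the slice $\{x_1=x_0\}$ for almost every choice of the remaining coordinates and spins. Introduce the linear trace map $R\colon(H^1(\T)\otimes\C^2)^{\wedge N}\to L^2$ sending $\Psi$ to its restriction $\Psi(x_0\sigma_1,x_2\sigma_2,\dotsc)$. This map is continuous with respect to the $H^1$-norm: for fixed remaining variables the function lies in $H^1(\T)$ in the variable $x_1$, so the one-dimensional embedding $H^1(\T)\hookrightarrow C^0(\T)$ yields a pointwise bound $\abs{\Psi(x_0,\cdot)}^2\leq C\int(\abs{\Psi}^2+\abs{\nabla_1\Psi}^2)\,\d x_1$, and integrating over the remaining coordinates and summing over spins gives $\norm{R\Psi}_{L^2}^2\leq C\norm{\Psi}_{H^1}^2$. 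Since $\norm{R\Psi}_{L^2}^2=\rho_\Psi(x_0)/N$, the assumption provides $R\Psi_k=0$ for all $k$.

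The decisive step is to upgrade the $L^2$-completeness of $\{\Psi_k\}$ to density in the form domain. Because $R$ is only $H^1$-continuous, knowing $R\Psi_k=0$ on a basis is not enough on its own; instead I would show that finite linear combinations of the $\Psi_k$ are dense in $Q(H_v)=(H^1(\T)\otimes\C^2)^{\wedge N}$ with respect to the $H^1$-norm. By the KLMN estimate the quadratic form of a suitably shifted $H_v+c\geq 1$ is equivalent to the $H^1$ inner product, so the form norm $\norm{(H_v+c)^{1/2}\Psi}$ is comparable to $\norm{\Psi}_{H^1}$; and for an operator with discrete spectrum the partial sums $\sum_{k\leq M}\innerproduct{\Psi_k,\Psi}\Psi_k$ converge to $\Psi$ in this form norm for every $\Psi\in Q(H_v)$. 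Hence $R$ vanishes on an $H^1$-dense subspace and, being $H^1$-continuous, vanishes identically on $(H^1(\T)\otimes\C^2)^{\wedge N}$.

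This yields the contradiction that closes the proof. By Theorem~\ref{th:N-rep and properties of densset}(ii) any density $\rho'\in\densset$ with $\rho'(x_0)>0$ is represented by some $\Phi\in\psiset$, and for this $\Phi$ one has $\norm{R\Phi}_{L^2}^2=\rho_\Phi(x_0)/N=\rho'(x_0)/N>0$, i.e.\ $R\Phi\neq 0$, contradicting $R\equiv 0$. Thus $\rho(x_0)>0$ for every $x_0\in\T$. The main obstacle is precisely the form-domain density of the third step: the argument genuinely requires the $H^1$-completeness of the eigenfunction span (via the KLMN form-equivalence and discreteness of the spectrum), since the naive $L^2$-completeness is incompatible with the mere $H^1$-continuity of the restriction map.
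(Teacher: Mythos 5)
Your proof is correct, and its skeleton is the same as the paper's: diagonalize the Gibbs state, observe that all weights $\e^{-\beta\lambda_j}/Z(v)$ are strictly positive so that $\rho(x_0)=0$ forces every eigenfunction to vanish (as an $L^2$-trace) on the slice $\{x_0\}\times\T^{N-1}$, and then contradict completeness of the eigenbasis. Where you genuinely go beyond the paper is at the decisive completeness step. The paper's proof only remarks that linear combinations of the $\psi_j$ vanish on $U$ ``if the linear combination is in $H^1$ too'' and that every function in $H^1(\T^N)\subset L^2(\T^N)$ must be expressible in the basis; it never establishes that the eigenfunction expansion of an $H^1$ function converges in a topology strong enough for the trace to pass to the limit --- mere $L^2$-convergence does not suffice, as the paper itself concedes in its closing remark. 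You supply exactly the missing lemma: for $\Psi\in Q(H_v)$ the partial sums $\sum_{k\leq M}\innerproduct{\Psi_k,\Psi}\Psi_k$ converge in the form norm of $H_v+c$ (apply the spectral projections to $(H_v+c)^{1/2}\Psi$), and by the two KLMN bounds this norm is equivalent to the $H^1$-norm, so the $H^1$-continuous trace map $R$, vanishing on each $\Psi_k$, vanishes on all of $Q(H_v)$. Your identity $\norm{R\Psi}_{L^2}^2=\rho_\Psi(x_0)/N$ checks out against Eq.~\eqref{eq:def-dens}, and your final witness via Theorem~\ref{th:N-rep and properties of densset}(ii) is tidier than the paper's ``clearly not every function in $H^1(\T^N)$ vanishes on $U$'', since it produces a nonvanishing element inside the antisymmetric spin space $(H^1(\T)\otimes\C^2)^{\wedge N}$ on which $R$ is actually defined (a constant density $\rho'=N\in\densset$ suffices). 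In short: same route, but your version closes the form-domain density gap that the paper's own writeup leaves implicit.
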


\begin{proof}
Let $v \in H^{-1}(\T)$. The Gibbs state is given by $\Gamma_v=\e^{-\beta H_v}/Z(v)$ which can be written as 
\begin{equation}
    \Gamma_v=\frac{1}{Z(v)}\sum_j \e^{-\beta \lambda_j}\ket{\psi_j}\bra{\psi_j},
\end{equation}
where the $\lambda_j$ are the eigenvalues of $H_v$ with corresponding eigenstates $\psi_j\in\psiset$. We denote the density of $\Gamma_v$ by $\rho$ and the densities of $\psi_j$ by $\rho_j$. Let us assume that the density $\rho$ is equal to $0$ at $x_0$. By the relation 
\begin{equation}
    \rho(x)=\frac{1}{Z(v)}\sum_j \e^{-\beta \lambda_j} \rho_j(x)   
\end{equation}
we conclude that $\rho_j(x_0)=0$ for all $j$ since the weights are all non-zero. This implies that all $\psi_j$ vanish on $U=\{x_0\} \times [0,1]^{N-1} \times \{\uparrow, \downarrow\}^N$. Note that the restriction to $U$ of $\psi_j\in\psiset$ or any linear combination of those vectors in $(H_{\C}^1(\T)\otimes\C^2)^{\wedge N}$ is well-defined by the trace operator for Sobolev spaces~\cite[p.~315]{brezis2011functional}, which is not the case for a general element in $\H=(L^2_{\C}(\T)\otimes\C^2)^{\wedge N}$. Since the $\psi_j$ form a basis of $(L^2_\C(\T)\otimes\C^2)^{\wedge N}$ we also need to be able to express each function in $(H_{\C}^1(\T)\otimes\C^2)^{\wedge N}\subset (L^2_\C(\T)\otimes\C^2)^{\wedge N}$ as their linear combination. Now this yields a contradiction since clearly not every function in $(H_{\C}^1(\T)\otimes\C^2)^{\wedge N}$ vanishes on $U$.
\end{proof}

\section{Gâteaux differentiability}\label{sec:Gateaux}
The last part of the main result (Theorem \ref{th:main}) is about Gâteaux differentiability of $F_\mathrm{DM}^\beta$ at densities $\rho \in \vrepset$. As mentioned in Section~\ref{sec:v-rep, subdiff}, Gâteaux differentiability can be achieved by continuity and a single-valued subdifferential. Thus, the goal of this section is to show that each density $\rho \in \vrepset$ has only one subgradient. We do that by using a standard result from convex analysis and by the usual Hohenberg--Kohn argument.

\setcounter{theorem}{12}

\begin{lemma}[\citet{Barbu-Precupanu}, Prop.~2.40]\label{lem: cond. for Gateaux differentiability}
    If a convex function $f$ on $\affspace$ is Gâteaux differentiable at $\rho$, then $\partial f(\rho)$ consists of a single element $v=\mathrm{grad}f(\rho)$. Conversely, if $f$ is continuous at $\rho$ and if $\partial f(\rho)$ contains only a single element, then $f$ is Gâteaux differentiable at $\rho$ and $\mathrm{grad} f(\rho)=\partial f(\rho)$.
\end{lemma}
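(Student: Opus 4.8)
The plan is to treat the two implications separately, with the one-sided directional derivative of a convex function as the common tool. First I would record that for $h\in\affspace$ and $t>0$ the quotient $t\mapsto(f(\rho+th)-f(\rho))/t$ is non-decreasing by convexity, so
\[
    f'(\rho;h)=\lim_{t\to 0^+}\frac{f(\rho+th)-f(\rho)}{t}
\]
exists in $[-\infty,\infty)$ and, by comparison with the value of the quotient at $t=1$, satisfies $f'(\rho;h)\le f(\rho+h)-f(\rho)$. This single inequality drives both directions.

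For the forward implication, assume $f$ is G\^ateaux differentiable at $\rho$, i.e.\ $f'(\rho;h)=\langle v,h\rangle$ with $v=\mathrm{grad}f(\rho)$. Putting $\rho'=\rho+h$ into the inequality above gives $\langle v,\rho'-\rho\rangle\le f(\rho')-f(\rho)$ for all $\rho'$, which is exactly the subgradient condition in the paper's sign convention, so $v\in\partial f(\rho)$. For uniqueness I would take an arbitrary $u\in\partial f(\rho)$, insert $\rho'=\rho+th$ with $t>0$ and divide by $t$ to obtain $\langle u,h\rangle\le(f(\rho+th)-f(\rho))/t$; letting $t\to 0^+$ yields $\langle u,h\rangle\le f'(\rho;h)=\langle v,h\rangle$, and applying this to $-h$ as well forces $\langle u,h\rangle=\langle v,h\rangle$ for every $h$, hence $u=v$ and $\partial f(\rho)=\{v\}$.

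The converse is the substantive part. Assume $f$ is continuous at $\rho$ and $\partial f(\rho)=\{v\}$, and aim to show $f'(\rho;h)=\langle v,h\rangle$ for all $h$. Continuity of a convex function at $\rho$ makes it locally Lipschitz there, so $p(h):=f'(\rho;h)$ is finite-valued, positively homogeneous and subadditive (hence sublinear) and Lipschitz, therefore continuous. The inequality $\langle v,h\rangle\le p(h)$ is immediate from $v\in\partial f(\rho)$ just as above. For the reverse inequality I would argue by contradiction: if $p(h_0)>\langle v,h_0\rangle$ for some $h_0$, then Hahn--Banach applied to the sublinear majorant $p$ produces a linear functional $u$ with $u(h_0)=p(h_0)$ and $u\le p$ everywhere; continuity of $p$ forces $u\in\affspace^*$, and $u(h)\le p(h)\le f(\rho+h)-f(\rho)$ shows $u\in\partial f(\rho)$. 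But $u(h_0)=p(h_0)>\langle v,h_0\rangle$ gives $u\neq v$, contradicting the singleton hypothesis. Hence $p(h)=\langle v,h\rangle$ identically, which is G\^ateaux differentiability with $\mathrm{grad}f(\rho)=v$.

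The hard part is the converse, and its crux is the continuity hypothesis: without it the functional $p$ might take the value $-\infty$ in some direction or fail to be continuous, in which case the Hahn--Banach extension need not land in the topological dual $\affspace^*$ and the singleton assumption could not be leveraged. It is precisely continuity of $f$ at $\rho$ that renders $p$ real-valued and continuous and thereby makes the separation argument deliver a genuine element of $\affspace^*$. In the application to $F_\mathrm{DM}^\beta$ this continuity at every $\rho\in\vrepset$ is exactly what Theorem~\ref{th:F-nonempty-subdiff} supplies, so the lemma can be invoked there once the subdifferential is shown to be a singleton.
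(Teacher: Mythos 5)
Your proof is correct, but note that the paper itself contains no proof of this lemma to compare against: it is imported verbatim from \citet{Barbu-Precupanu} (Prop.~2.40), and your argument is essentially the standard textbook route that the cited source also follows — the forward direction via monotonicity of difference quotients, and the converse via the max formula for the directional derivative at points of continuity, which is exactly what your Hahn--Banach step on the sublinear functional $p(h)=f'(\rho;h)$ establishes. Your identification of continuity as the crux is on target: it yields local Lipschitz continuity, hence finiteness and continuity of $p$, which is what forces the Hahn--Banach extension into the topological dual. Three small points are worth spelling out. First, $\affspace$ is an affine space, so the directional arguments live in the Banach space of directions $\{h\in H^1(\T)\mid \smallint h=0\}$, whose dual is precisely $\affspace^*$ (potentials modulo constants); the Hahn--Banach extension must be performed there, and your Lipschitz bound $\abs{u(h)}\le L\norm{h}_{H^1}$ is what places $u$ in that dual. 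Second, since $f$ is proper with values in $(-\infty,\infty]$, the one-sided derivative a priori lies in $[-\infty,+\infty]$, not $[-\infty,\infty)$ — it is $+\infty$ in directions leaving the effective domain — though this is harmless here because continuity at $\rho$ makes $f$ finite on a neighborhood. Third, Gâteaux differentiability requires the two-sided limit: once $p(h)=\langle v,h\rangle$ for all $h$, linearity gives $p(-h)=-p(h)$, so the limits from $t\to 0^+$ and $t\to 0^-$ agree; this last line deserves a sentence rather than being absorbed into ``which is Gâteaux differentiability.''
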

We now have all the necessary ingredients to show Gâteaux differentiability of the thermal universal functional.
\begin{theorem}\label{th: differentiability of universal functional}
    The thermal universal functional $F_\mathrm{DM}^\beta$ is Gâteaux differentiable at any density $\rho \in \vrepset$.
\end{theorem}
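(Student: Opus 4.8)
The plan is to establish Gâteaux differentiability of $F_\mathrm{DM}^\beta$ at $\rho \in \vrepset$ by combining the two conditions from Lemma~\ref{lem: cond. for Gateaux differentiability}: continuity at $\rho$ and a singleton subdifferential. Continuity is already in hand, since Theorem~\ref{th:F-nonempty-subdiff} gives continuity of $F_\mathrm{DM}^\beta$ at every $\rho \in \vrepset$ (and that same theorem guarantees $\partial F_\mathrm{DM}^\beta(\rho) \neq \emptyset$). So the entire burden of the proof is to show that $\partial F_\mathrm{DM}^\beta(\rho)$ contains \emph{at most} one element. The natural route is a duality argument: the Helmholtz functional $\Omega^\beta$ and the universal functional $F_\mathrm{DM}^\beta$ are a conjugate pair (up to sign), as encoded in Eq.~\eqref{eq: Omega-from-F}, and the strict concavity of $\Omega^\beta$ from Lemma~\ref{lem: strict concavity Helmholtz free energy} should translate into a singleton subdifferential for the conjugate.

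\emph{The key steps, in order.} First I would make the conjugacy precise. Equation~\eqref{eq: Omega-from-F} expresses $-\Omega^\beta(v)$ as the Legendre--Fenchel conjugate of $F_\mathrm{DM}^\beta$ (with a sign flip in the pairing), so that $-\Omega^\beta = (F_\mathrm{DM}^\beta)^*$ on $\affspace^*$, or equivalently $\Omega^\beta(-v)$ plays the role of the conjugate in the subgradient correspondence. Second, I would invoke Lemma~\ref{lem:equiv. of subdifferentials for function and conjugate} to turn the statement $-v \in \partial F_\mathrm{DM}^\beta(\rho)$ into the dual statement $\rho \in \partial \Omega^\beta(v)$ (again tracking the sign). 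This requires $F_\mathrm{DM}^\beta$ to be proper, convex and lower semi-continuous: properness and convexity come from Lemma~\ref{lem:F-bound} and Proposition~\ref{prop:convexity of Helmholtz and universal functional}, and if $F_\mathrm{DM}^\beta$ is not already closed I would replace it by $\mathrm{cl}\,F_\mathrm{DM}^\beta = (F_\mathrm{DM}^\beta)^{**}$ via Lemma~\ref{lem:biconjugate equal closure}, checking that the closure leaves the value and the subdifferential at the interior point $\rho \in \vrepset$ unchanged (continuity at $\rho$ already ensures $F_\mathrm{DM}^\beta$ agrees with its closure there). Third, I would argue that strict concavity of $v \mapsto \Omega^\beta(v)$ forces the map $v \mapsto \partial \Omega^\beta(v)$ to be injective in the sense that two \emph{distinct} subgradients $-v_1 \neq -v_2$ of $F_\mathrm{DM}^\beta$ at the \emph{same} $\rho$ would make $\rho$ a common subgradient of $\Omega^\beta$ at both $v_1$ and $v_2$, and this is incompatible with strict concavity.

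\emph{Carrying out the final step concretely.} Suppose $-v_1, -v_2 \in \partial F_\mathrm{DM}^\beta(\rho)$ with $[v_1] \neq [v_2]$ in $\affspace^*$. By the conjugate correspondence this means $\rho \in \partial\Omega^\beta(v_1)$ and $\rho \in \partial\Omega^\beta(v_2)$, i.e.\ $\rho$ is a supergradient of the concave functional $\Omega^\beta$ at both points. Writing the two supergradient inequalities and evaluating each at the opposite point gives $\Omega^\beta(v_2) - \Omega^\beta(v_1) \leq \langle \rho, v_2 - v_1\rangle$ and $\Omega^\beta(v_1)-\Omega^\beta(v_2) \leq \langle \rho, v_1 - v_2\rangle$; adding them yields $0 \leq 0$ with equality throughout, which means $\Omega^\beta$ is \emph{affine} along the segment joining $v_1$ and $v_2$. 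This contradicts strict concavity (Lemma~\ref{lem: strict concavity Helmholtz free energy}) unless $[v_1]=[v_2]$. Hence the subdifferential is a singleton, and with continuity already established, Lemma~\ref{lem: cond. for Gateaux differentiability} delivers Gâteaux differentiability.

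\emph{The main obstacle} I anticipate is bookkeeping around the closure and the duality, not the convex-analytic endgame. Strict concavity makes the final uniqueness argument clean; the delicate part is ensuring that the abstract lemmas of Barbu--Precupanu apply to $F_\mathrm{DM}^\beta$ on the space $\affspace$ with its correct dual $\affspace^*$ — in particular that $F_\mathrm{DM}^\beta$ (or its closure) is genuinely the conjugate of $-\Omega^\beta$ with no gap at $\rho \in \vrepset$, that the sign conventions in the dual pairing are consistent throughout, and that restricting to the affine subspace $\affspace$ (with the constant-shift equivalence in $\affspace^*$) does not spoil the subgradient correspondence. Once the conjugacy is pinned down at the interior density $\rho$, strict concavity of $\Omega^\beta$ does all the remaining work.
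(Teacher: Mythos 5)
Your proposal is correct and takes essentially the same route as the paper's proof: continuity plus a singleton subdifferential via Lemma~\ref{lem: cond. for Gateaux differentiability}, identification of $F_\mathrm{DM}^\beta$ with its biconjugate on the open set $\vrepset$ (Lemma~\ref{lem:biconjugate equal closure} plus continuity), the subgradient correspondence with the conjugate $-\Omega^\beta(-\cdot)$ from Lemma~\ref{lem:equiv. of subdifferentials for function and conjugate}, and strict concavity of $\Omega^\beta$ (Lemma~\ref{lem: strict concavity Helmholtz free energy}) to rule out two distinct potentials. Your explicit two-inequality computation showing $\Omega^\beta$ would be affine on the segment between $v_1$ and $v_2$ is simply a spelled-out version of the paper's closing remark that strict convexity of $-\Omega^\beta$ would otherwise fail on that line.
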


\begin{proof}
According to Theorem \ref{th:F-nonempty-subdiff}, the functional $F_\mathrm{DM}^\beta$ is continuous on $\vrepset$ and $\partial F_\mathrm{DM}^\beta (\rho)$ is non-empty for all densities $\rho \in \vrepset$. By Lemma \ref{lem: cond. for Gateaux differentiability} it is then enough to show that $\partial F_\mathrm{DM}^\beta (\rho)$ is single-valued. Let us assume that there is a $\rho \in \vrepset$ for which $\partial F_\mathrm{DM}^\beta(\rho)$ is not single-valued. In particular, we assume that $v_1,v_2 \in -\partial F_\mathrm{DM}^\beta(\rho)$ with $v_1 \neq v_2$. By Theorem \ref{th: existence of the minimizer for the universal functional} the two Gibbs states $\Gamma_1$ and $\Gamma_2$ corresponding to $v_1$ and $v_2$ both yield the density $\rho$. Moreover, Proposition \ref{prop: minimizer of Helmhotz functional/grand potential} states $\Gamma_1 \neq \Gamma_2$. We then argue like in the proof of the Hohenberg--Kohn theorem~\cite{hohenberg-kohn1964} that
\begin{equation}\begin{aligned}
     \Omega^\beta(v_1)&=\Omega^\beta_{v_1}(\Gamma_1)=\Omega^\beta_{v_2}(\Gamma_1)+\innerproduct{\rho,v_1-v_2} > \Omega^\beta_{v_2}(\Gamma_2)+\innerproduct{\rho,v_1-v_2}= \Omega^\beta(v_2)+\innerproduct{\rho,v_1-v_2}.
\end{aligned}\end{equation}
Changing the roles of $v_1$ and $v_2$ and adding the two inequalities gives the contradiction
\begin{equation}
    \Omega^\beta(v_1)+\Omega^\beta(v_2)<\Omega^\beta(v_2)+\Omega^\beta(v_1).
\end{equation}
Hence, $\partial F_\mathrm{DM}^\beta (\rho)$ must be single-valued.
\end{proof}

\section{Summary and outlook}
\label{sec:summary} 

We can now collect the results from the previous sections to finally give a proof of our main result.

\begin{proof}[Proof of Theorem \ref{th:main}]
Let $\rho \in \vrepset$, then Theorem \ref{th:F-nonempty-subdiff} shows that $\partial F^\beta_\mathrm{DM}(\rho)$ is non-empty. This implies that $\rho$ is a minimizer for the variational principle in Eq.~\eqref{eq: Omega-from-F}, where the potential is in the negative subdifferential, i.e., $v \in -\partial F^\beta_\mathrm{DM}(\rho)$. The minimizer of Eq.~\eqref{eq:F_DM-const-search}, guaranteed to exist by Theorem \ref{th: existence of the minimizer for the universal functional}, is equal to the Gibbs state $\Gamma_v$. We conclude, that $\rho$ is  $v$-representable. On the other hand, every density of a Gibbs state $\Gamma_v$ with $v \in H^{-1}(\T)$ is contained in $\densset$ according to Proposition \ref{prop: Gibbs state density in densset} and is strictly positive by Theorem \ref{th: Gibbs state density non-zero}. In particular, every $v$-representable density is in $\vrepset$ and $F^\beta_\mathrm{DM}$ is not differentiable at densities in $\densset\setminus\vrepset$. On the other hand, Theorem \ref{th: differentiability of universal functional} shows that the thermal universal functional $F^\beta_\mathrm{DM}$ is Gâteaux differentiable at any $\rho \in \vrepset$.
\end{proof}

In this work we provided an explicit characterization of the set of uniquely $v$-representable densities on the one-dimensional torus at elevated temperatures and established (Gâteaux) differentiability of the thermal universal functional. In the same manner the theory can be developed for any other interval with Neumann boundary conditions. As an extensive parameter, the volume needs to be finite, and thus a periodic setting is the natural choice for elevated temperatures. The restriction to one dimension remains as the main limitation of the current formulation even though some of the results like the uniqueness of the Gibbs state, strict positivity of the density and the uniqueness of the potential also hold in higher dimensions. Note that we cannot conclude Gâteaux differentiability in higher dimensions since continuity of the thermal universal functional might not be guaranteed. Another interesting observation is that we need to include distributional potentials for achieving a $v$-representability result already in one dimension. In DFT often the potential set $L^{3/2}+L^\infty$ is considered~\cite{Lieb1983}. However, for zero temperature and single particles it was already observed by \citet{ENGLISCH1983} and \citet{CCR1985} that certain densities require distributional potentials for their realization. \citet{sutter2024solution} showed that distributional potentials of the form Eq.~\eqref{eq:theorem1-b} are sufficient for $v$-representability of the set $\vrepset$. For elevated temperature we further show that such distributional potentials are also necessary. Since it cannot be expected that in higher dimensions the space of necessary potentials gets more regular, we conjecture that distributional potentials (probably of a higher class) are also required there if a similar proof of $v$-representability is feasible. However, in the current construction, the continuous Sobolev embedding $H^1 \hookrightarrow L^\infty$ is crucial and it holds in one dimension. For higher dimensions, we either need to require higher regularity or higher integrability conditions. This means that for densities we consider the Sobolev space $W^{m,p}$ and need to have $m$ and/or $p$ large enough. The problem is then that not all Gibbs state densities might be contained in $W^{m,p}$ and that a Hamiltonian with a potential in $W^{-m,q}=(W^{m,p})^\ast$ will usually not be self-adjoint~\cite{HERCZYNSKI1989-KLMN}. Another open question is how to extend the results to the grand canonical ensemble of variable particle number. In this case, the particle-number operator is not bounded anymore and this is currently needed in the proof of Theorem~\ref{th: existence of the minimizer for the universal functional}, where we show that the weak-* limit yields the same density as the sequence. Additionally, we do not know if the square root of the Gibbs state density is in $H^1$. However, we think that the rest of the theory does not fundamentally change for the grand canonical ensemble and that we retrieve the same conclusions as for the canonical ensemble. 

\begin{acknowledgments}
We are grateful to Thiago Carvalho Corso for helpful feedback on a previous version of this article.
MP acknowledges support from the German Research Foundation under Grant SCHI 1476/1-1 and from ERC-2021-STG grant agreement No.\ 101041487 REGAL. The authors SMS and KJHG thank The Netherlands Organization for
Scientific Research, NWO, for its financial support under
Grant No.~OCENW.KLEIN.434 and Vici Grant No.~724.017.001. RvL acknowledges the Academy of Finland grant under project number 356906. MR acknowledges the Cluster of Excellence ``CUI: Advanced Imaging of Matter'' of the Deutsche Forschungsgemeinschaft (DFG), EXC 2056, project ID 390715994.
\end{acknowledgments}

\begin{appendix}
\section{Proof of Proposition \ref{prop: minimizer of Helmhotz functional/grand potential}}\label{app: proof minimizer properties}
We still need to show that Proposition \ref{prop: minimizer of Helmhotz functional/grand potential} holds. The result that the unique minimizer of the Helmholtz free energy is given by the Gibbs state $\Gamma_v=\e^{\beta H_v}/Z(v)$ and the relation $\Omega^\beta_v(\Gamma_v)=-\beta \log Z(v)$ is shown in standard literature. We give here a more elaborated version than our previous works~\cite{GiesbertzRuggenthaler2019, SutterGiesbertz2023} in the form of a formal proof without going into mathematical subtleties. A classical proof can be found in \citet[Appendix]{mermin1965thermal} for the grand canonical ensemble but works analogously for the canonical one.

\begin{proof}[Proof of Proposition \ref{prop: minimizer of Helmhotz functional/grand potential}]
To determine the minimizer of the Helmholtz free energy $\Omega^\beta_v(\Gamma)= \Tr \{\Gamma (H_v+\beta^{-1}\log\Gamma)\}$ we make it stationary with respect to small perturbations $\delta \Gamma$ that obey $\Tr \{\delta \Gamma\}=0$. We compute
\begin{equation}
\label{minimizer Helmholtz functional: perturbation difference}
    \Omega^\beta_v(\Gamma+\epsilon \delta \Gamma)-\Omega^\beta_v(\Gamma)=\epsilon \Tr \{\delta \Gamma H_v\}+\beta^{-1}\Tr \{(\Gamma + \epsilon \delta \Gamma)\log(\Gamma + \epsilon \delta \Gamma)\}- \beta^{-1}\Tr \{\Gamma \log \Gamma\}.
\end{equation}
We use the Taylor expansion of $x\log(x)$ to get $(\Gamma + \epsilon \delta \Gamma)\log(\Gamma+\epsilon \delta \Gamma)= \Gamma\log \Gamma + \log (\Gamma) \epsilon \delta \Gamma + \epsilon \delta \Gamma+ \mathcal{O}(\epsilon^2)$. Dividing by $\epsilon$ turns Eq.~\eqref{minimizer Helmholtz functional: perturbation difference} into 
\begin{equation}\begin{aligned}
\label{eq: differential quotient for Helmholtz functional}
    \frac{\Omega^\beta_v(\Gamma+\epsilon \delta \Gamma)-\Omega^\beta_v(\Gamma)}{\epsilon}&=\Tr \{\delta \Gamma H_v\}+ \beta^{-1}\Tr \{\delta \Gamma \log \Gamma\} + \beta^{-1}\Tr \{\delta \Gamma\}+\mathcal{O}(\epsilon)\\ 
    &= \Tr \{\delta \Gamma H_v\}+ \beta^{-1}\Tr \{\delta \Gamma \log \Gamma\} +\mathcal{O}(\epsilon).
\end{aligned}\end{equation}
The quotient in Eq.~\eqref{eq: differential quotient for Helmholtz functional} must go to zero for any minimizer $\Gamma_v$ as $\epsilon \to 0$. Hence we have $\Tr \{\delta \Gamma H_v\}+ \beta^{-1}\Tr \{\delta \Gamma \log \Gamma_v\}=\Tr \{\delta \Gamma(H_v+\beta^{-1}\log \Gamma_v)\}=0$ for all $\delta\Gamma$ which requires $H_v+\beta^{-1}\log \Gamma_v=\mathrm{const}$. This then leads to the unique minimizer $\Gamma_v=\e^{-\beta H_v}/Z(v)$ where $Z(v)$ is determined from the normalization constraint.\\
To get the expression for the Helmholtz free energy, let us first assume that $Z(v)$ is finite, i.e., that the Gibbs state exists. Using the explicit expression of the Gibbs state $\Gamma_v$ and $\log \Gamma_v=-\beta H_v- \log Z(v)$ gives the expression
\begin{equation}
    \Omega^\beta(v)=\Omega^\beta_v(\Gamma_v)=\Tr \{\Gamma_v(H_v+\beta^{-1}\log \Gamma_v)\}=\Tr \{\Gamma_v(H_v-H_v-\beta^{-1}\log Z(v))\}=-\beta^{-1}\log Z(v).
\end{equation}
For the last remaining part of the proposition we need to show that $Z(v)$ is indeed finite for any potential $v \in H^{-1}(\T)$. First, we show that the partition function is finite for a non-interacting system with zero potential, i.e., $H_v=T$. We denote this partition function by $Z_0(0)$. The eigenstates of the kinetic-energy operator $T=-\frac{1}{2}\Delta$ on $\T$ are Slater determinants built from the functions $\e^{\i 2 \pi p\cdot x}$ with $p \in \Z$ which have eigenvalues $2\pi^2 p^2$. We estimate
\begin{equation}
\label{eq: finite Z_0(0)}
    Z_0(0)=\Tr \{\e^{-\beta T}\}\leq\sum \limits_{p_1, \ldots, p_N \in \Z}\e^{-\beta 2\pi^2(p_1^2+\ldots+p_N^2)}=\Big[\sum \limits_{p \in \Z}\e^{-\beta 2\pi^2p^2}\Big]^N<\infty.
\end{equation}
To show that $Z(v)$ is finite for a general interaction and a general potential we first show that $\Omega^\beta(v)$ is finite. For that recall that we assume that the interaction satisfies the KLMN condition from Eq.~\eqref{eq:KLMN condition for the interaction}. Moreover, \citet[Lem.~16]{sutter2024solution} showed that any potential $v\in H^{-1}$ also fulfills the KLMN condition $|V(\Psi, \Psi)| \leq a' T(\Psi)+b'$ where $a' > 0$ can be made arbitrary small. Here $V(\cdot, \cdot)$ is the quadratic form produced by $v \in H^{-1}$. Combing these two KLMN conditions guarantees the existence of new constants $0\leq a< 1$ and $b\geq 0$ such that $\langle \Psi, H_v \Psi \rangle  \geq (1-a)T(\Psi)-b$ for all $\Psi \in \psiset$. This condition translates to density matrices and we get
\begin{equation}\label{eq: bound and finiteness of Helmholtz free energy}
    \Omega^\beta(v)=\inf_{\Gamma\in\DMset} \Tr \{\Gamma(H_v+\beta^{-1} \log(\Gamma))\}\geq (1-a)\inf_{\Gamma\in\DMset} \Tr \{\Gamma(T+\big(\beta(1-a)\big)^{-1}\log(\Gamma))\}-b=-\beta^{-1}\log\big(Z_0(0)\big)-b>-\infty.
\end{equation}
Let $H_v=\sum _{j=1}^\infty \lambda_j \ket{\psi_j}\bra{\psi_j}$ be the spectral decomposition of the Hamiltonian $H_v$ which exist due to the compact embedding of $Q(H_v)=\big(H_{\C}^1(\T)\otimes \C^2\big)^{\wedge N}$ into $\big(L^2_{\C}(\T)\otimes \C^2\big)^{\wedge N}$. We define a sequence of density matrices by 
\begin{equation}
    \Gamma_n=\frac{1}{Z_n[v]}\sum \limits_{j=1}^n \e^{-\beta \lambda_j}\ket{\psi_j}\bra{\psi_j} \qquad \text{with} \qquad Z_n[v]=\sum \limits_{j=1}^n  \e^{-\beta \lambda_j}
\end{equation}
and we compute
\begin{equation}
    \Omega^\beta(v) < \Tr \{\Gamma_n(H_v+\beta^{-1}\log(\Gamma_n))=\frac{1}{Z_n[v]}\sum \limits_{j=1}^n \e^{-\beta \lambda_j}(-\beta^{-1}\log Z_n[v])=-\beta^{-1}\log Z_n[v]. 
\end{equation}
The left-hand side is bounded from below by Eq.~\eqref{eq: bound and finiteness of Helmholtz free energy} and we get that $Z_n[v]$ is uniformly bounded from above. This implies $Z(v)<\infty$.\\
Finally, to show that $\Gamma_v\in\DMset$ we need to show $S(\Gamma_v)<\infty$. We use again $\log \Gamma_v=-\beta H_v- \log Z(v)$ and show with Lemma~\ref{lem: finite kinetic energy of Gibbs state} that
\begin{equation}
    S(\Gamma_v) = -\Tr\{\Gamma_v\log\Gamma_v\} = \beta\Tr\{\Gamma_v H_v\} + \log Z(v)\Tr\{\Gamma_v\} <\infty.
\end{equation}
\end{proof}

\section{Proof of Proposition \ref{prop: Gibbs state density in densset}}\label{app: proof Gibbs state density in densset}
In this section we show that the density of a Gibbs state $\Gamma_v$ is contained in $\densset$. To do so, note that  as explained in the proof of Proposition \ref{prop: minimizer of Helmhotz functional/grand potential} we have for each potential $v$ two constants $0 \leq a <1$ and $b \geq 0$ such that 
\begin{align}
    \label{eq: lower and upper bound of Hamiltonian with T}
    (1-a)T-b\leq H_v \leq (1+a)T+b,
\end{align}
where $T=-\frac{1}{2}\Delta$ is the kinetic energy operator. Moreover, the Hamiltonian $H_v$ (and also $-\Delta/2$) has discrete spectrum due to the compact embedding $Q(H_v)=\bigl(H_{\C}^1(\T)\otimes \C^2\bigr)^{\wedge N} \hookrightarrow \bigl(L^2_\C(\T)\otimes \C^2\bigr)^{\wedge N}$. We can express $H_v$ and $T$ in their respective spectral decomposition
\begin{align}
\label{eq: spectral decomposition Hamiltonian and kinetic operator}
    H_v= \sum \limits_{j=1}^\infty \lambda_j \ket{\psi_j}\bra{\psi_j}, \quad T=\sum \limits_{j=1}^\infty \mu_j \ket{\phi_j}\bra{\phi_j},
\end{align}
where we additionally require $\lambda_1 \leq \lambda_2\leq \ldots$ and $\mu_1 \leq \mu_2 \leq \ldots$. Then we get the following lemma.
\begin{lemma}\label{lem: lower, upper bound for eigenvalues}
    The $n$'th eigenvalue of $H_v$ has the following lower and upper bound, $(1-a)\mu_n-b\leq \lambda_n\leq (1+a)\mu_n+b$.
\end{lemma}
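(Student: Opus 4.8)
The plan is to read the two-sided operator bound \eqref{eq: lower and upper bound of Hamiltonian with T} as an inequality between the associated quadratic forms on the common form domain $Q(H_v)=Q(T)=\bigl(H^1(\T)\otimes\C^2\bigr)^{\wedge N}$ (whose coincidence is exactly what the KLMN condition guarantees) and then feed it into the min--max (Courant--Fischer) variational principle. Since both $H_v$ and $T$ are semibounded and have purely discrete spectrum by the compact Sobolev embedding, their ordered eigenvalues admit the characterization
\begin{equation*}
    \lambda_n=\inf_{\substack{M\subseteq Q(H_v)\\ \dim M=n}}\ \sup_{\substack{\psi\in M\\ \psi\neq 0}}\frac{\langle\psi,H_v\psi\rangle}{\langle\psi,\psi\rangle},\qquad
    \mu_n=\inf_{\substack{M\subseteq Q(T)\\ \dim M=n}}\ \sup_{\substack{\psi\in M\\ \psi\neq 0}}\frac{\langle\psi,T\psi\rangle}{\langle\psi,\psi\rangle},
\end{equation*}
where in both infima $M$ ranges over the \emph{same} family of $n$-dimensional subspaces of the shared form domain.

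First I would fix an $n$-dimensional subspace $M$ and use the form inequality pointwise: for every $\psi\in M$ the lower bound in \eqref{eq: lower and upper bound of Hamiltonian with T} gives $\langle\psi,H_v\psi\rangle\geq (1-a)\langle\psi,T\psi\rangle-b\langle\psi,\psi\rangle$. Dividing by $\langle\psi,\psi\rangle$ and taking the supremum over $\psi\in M$, the fact that $1-a>0$ lets me pull the positive factor through the supremum, so that $\sup_{\psi\in M}\langle\psi,H_v\psi\rangle/\langle\psi,\psi\rangle\geq (1-a)\sup_{\psi\in M}\langle\psi,T\psi\rangle/\langle\psi,\psi\rangle-b$. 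Taking the infimum over all such $M$ and again using $1-a>0$ to commute the positive factor with the infimum then yields $\lambda_n\geq (1-a)\mu_n-b$. The upper bound is symmetric: the right inequality in \eqref{eq: lower and upper bound of Hamiltonian with T} gives $\langle\psi,H_v\psi\rangle\leq (1+a)\langle\psi,T\psi\rangle+b\langle\psi,\psi\rangle$, and the identical sup/inf manipulation with the positive factor $1+a$ produces $\lambda_n\leq (1+a)\mu_n+b$.

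The only genuinely delicate point---and the step I would write out most carefully---is the legitimacy of applying the min--max principle at the level of forms with a \emph{single} admissible class of test subspaces. The displayed bound \eqref{eq: lower and upper bound of Hamiltonian with T} must be understood as a form inequality valid for all $\psi\in Q(T)$, and the whole argument hinges on $Q(H_v)=Q(T)$ so that every competitor subspace for $\mu_n$ is also a competitor for $\lambda_n$ and vice versa; this is precisely where the KLMN hypothesis \eqref{eq:KLNM condition for the interaction} (together with the corresponding bound for $v\in H^{-1}$) enters. Everything else reduces to the elementary observation that multiplying by a positive constant and adding a constant commutes with $\sup$ and $\inf$, so no further estimates are required.
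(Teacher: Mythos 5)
Your proof is correct and takes essentially the same route as the paper: both rest on the two-sided form inequality $(1-a)T-b\leq H_v\leq (1+a)T+b$ on the common KLMN form domain, combined with the variational characterization of the purely discrete eigenvalues. The only cosmetic difference is that the paper proves the upper bound by testing the min--max principle on the span of the first $n$ eigenfunctions of $T$ and the lower bound via the complementary max--min principle with the first $n-1$ eigenfunctions of $T$ as constraint vectors, whereas you stay within the single Courant--Fischer inf--sup formulation for both directions, using only that $\sup$ and $\inf$ commute with multiplication by a positive constant and with additive shifts.
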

\begin{proof}
   Let $\phi_j$ be the states introduced in Eq.~\eqref{eq: spectral decomposition Hamiltonian and kinetic operator}. By the min-max principle for self-adjoint operators and by Eq.~\eqref{eq: lower and upper bound of Hamiltonian with T} we get
   \begin{equation}
    \begin{aligned}
       \lambda_n&=\min_{\substack{\varphi_1, \ldots, \varphi_n\\
       \mathrm{orthon.}}} \max \big\{\bra{\psi}H_v\ket{\psi} \mid \psi \in \mathrm{span}\{\varphi_1, \ldots, \varphi_n\}, \norm{\psi}_2=1\big\}\leq \max \big\{\bra{\psi}H_v\ket{\psi}\mid \psi \in \mathrm{span}\{\phi_1, \ldots, \phi_n\}, \norm{\psi}_2=1\big\}\\
       &\leq \max \big\{(1+a)\bra{\psi}T\ket{\psi}+b \mid \psi \in \mathrm{span}\{\phi_1, \ldots, \phi_n\}, \norm{\psi}_2=1\big\}=(1+a)\mu_n+b.
    \end{aligned}
    \end{equation}
    For the lower bound we proceed in the same way but we use the max-min principle instead to get
    \begin{equation}
    \begin{aligned}
        \lambda_n&=\max_{\substack{\varphi_1, \ldots, \varphi_{n-1}\\ \mathrm{orthon.}}} \min \big\{\bra{\psi}H_v\ket{\psi}\mid \psi \perp \varphi_1, \ldots, \varphi_{n-1}, \norm{\psi}_2=1\big\}\geq \min \big\{\bra{\psi}H_v\ket{\psi}\mid \psi \perp \phi_1, \ldots, \phi_{n-1}, \norm{\psi}_2=1\big\}\\
        &\geq \min \big\{(1-a)\bra{\psi}T\ket{\psi}-b\mid \psi \perp \phi_1, \ldots,\phi_{n-1}, \norm{\psi}_2=1\big\}=(1-a)\mu_n-b.
    \end{aligned}
    \end{equation}
    This gives the desired bounds.
\end{proof}
We use the lower and upper bound of the eigenvalues to prove that expectation value of the kinetic energy operator is finite.
\begin{lemma}\label{lem: finite kinetic energy of Gibbs state}
    For all $v \in H^{-1}(\T)$ we have $0\leq \Tr\{T\Gamma_v\}<\infty$ and $-\infty<\Tr\{H_v\Gamma_v\}<\infty$.
\end{lemma}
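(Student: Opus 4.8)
The plan is to express the kinetic energy expectation as a spectral sum against the Gibbs weights and then control each term through the eigenvalues $\lambda_j$ of $H_v$, for which a two-sided comparison with the kinetic eigenvalues $\mu_j$ is already available from Lemma~\ref{lem: lower, upper bound for eigenvalues}. Using the spectral decomposition in Eq.~\eqref{eq: spectral decomposition Hamiltonian and kinetic operator} together with $\Gamma_v=\e^{-\beta H_v}/Z(v)$, I would write
\begin{equation*}
    \Tr\{T\Gamma_v\}=\frac{1}{Z(v)}\sum_j\e^{-\beta\lambda_j}\langle\psi_j,T\psi_j\rangle.
\end{equation*}
Each eigenstate $\psi_j$ lies in the common form domain $Q(H_v)=Q(T)$ guaranteed by the KLMN condition, so the individual terms are finite and only convergence of the series is at stake.

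First I would bound the per-state kinetic energy. Testing the lower form inequality $(1-a)T-b\leq H_v$ from Eq.~\eqref{eq: lower and upper bound of Hamiltonian with T} on the normalized eigenstate $\psi_j$ gives $(1-a)\langle\psi_j,T\psi_j\rangle-b\leq\lambda_j$, hence
\begin{equation*}
    \langle\psi_j,T\psi_j\rangle\leq\frac{\lambda_j+b}{1-a},
\end{equation*}
which is legitimate since $0\leq a<1$. Next I would trade the $\lambda_j$ for the kinetic eigenvalues via Lemma~\ref{lem: lower, upper bound for eigenvalues}: the lower bound $\lambda_j\geq(1-a)\mu_j-b$ gives $\e^{-\beta\lambda_j}\leq\e^{\beta b}\e^{-\beta(1-a)\mu_j}$, while the upper bound $\lambda_j\leq(1+a)\mu_j+b$ controls the prefactor through $\lambda_j+b\leq(1+a)\mu_j+2b$. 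Combining these reduces the claim to the finiteness of
\begin{equation*}
    \sum_j\e^{-\beta(1-a)\mu_j}\bigl((1+a)\mu_j+2b\bigr).
\end{equation*}

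The last step is where the only genuine work lies, but it is routine. Since $x\,\e^{-\beta(1-a)x/2}$ is bounded on $[0,\infty)$, each summand is dominated by a constant times $\e^{-\gamma\mu_j}$ with $\gamma=\beta(1-a)/2>0$, so it suffices to show $\sum_j\e^{-\gamma\mu_j}<\infty$. This follows exactly as in Eq.~\eqref{eq: finite Z_0(0)}: the $\mu_j$ are the eigenvalues of $-\tfrac12\Delta$ on $\T^N$, so
\begin{equation*}
    \sum_j\e^{-\gamma\mu_j}\leq\sum_{p_1,\ldots,p_N\in\Z}\e^{-\gamma 2\pi^2(p_1^2+\cdots+p_N^2)}=\Bigl[\sum_{p\in\Z}\e^{-\gamma 2\pi^2p^2}\Bigr]^N<\infty,
\end{equation*}
the Gaussian sum converging for every $\gamma>0$. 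Together with $Z(v)>0$ from Proposition~\ref{prop: minimizer of Helmhotz functional/grand potential} this yields $\Tr\{T\Gamma_v\}<\infty$. I expect the only point requiring care to be the per-state form inequality — namely that the eigenstates genuinely sit in $Q(T)$ so that $\langle\psi_j,T\psi_j\rangle$ is defined — which is precisely what the equality of form domains $Q(H_v)=Q(T)$ from the KLMN condition provides.
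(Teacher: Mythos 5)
Your proof is correct and takes essentially the same route as the paper: both arguments rest on the two-sided form bound in Eq.~\eqref{eq: lower and upper bound of Hamiltonian with T}, the min-max eigenvalue comparison of Lemma~\ref{lem: lower, upper bound for eigenvalues}, and the Gaussian sum estimate of Eq.~\eqref{eq: finite Z_0(0)}. The only (immaterial) difference is ordering --- the paper first establishes $\Tr\{H_v\Gamma_v\}<\infty$ and then transfers to the kinetic energy via $(1-a)\Tr\{T\Gamma_v\}-b\leq\Tr\{H_v\Gamma_v\}$, whereas you apply the same form inequality eigenstate by eigenstate, which incidentally sidesteps any sign discussion for possibly negative $\lambda_j$ since $\langle\psi_j,T\psi_j\rangle\geq 0$ throughout.
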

\begin{proof}
   The Gibbs state is given by $\Gamma_v=\frac{1}{Z(v)}\sum _{j=1}^\infty \e^{-\beta \lambda_j}\ket{\psi_j}\bra{\psi_j}$. We first show that the expectation value is finite by using Lemma \ref{lem: lower, upper bound for eigenvalues}. That is,
   \begin{equation}
   \begin{aligned}
       -\infty &< \Tr\{H_v \Gamma_v\}=\frac{1}{Z(v)}\sum \limits_{j=1}^\infty \lambda_j \e^{-\beta \lambda_j}\leq \frac{1}{Z(v)}\sum \limits_{j=1}^\infty\big[(1+a)\mu_j+b\big]\e^{-\beta[(1-a)\mu_j-b]}\\
       &=\frac{\e^{\beta b}(1+a)}{Z(v)}\sum \limits_{j=1}^\infty\mu_j \e^{-\beta(1-a)\mu_j}+\frac{\e^{\beta b}b}{Z(v)} \sum \limits_{j=1}^\infty \e^{-\beta(1-a)\mu_j}. 
   \end{aligned}
   \end{equation}
   A similar computation as in Eq.~\eqref{eq: finite Z_0(0)} shows that the two sums are finite and we get $\Tr\{H_v \Gamma_v\}<\infty$. Using Eq.~\eqref{eq: lower and upper bound of Hamiltonian with T} and positivity of the kinetic energy operator gives
   \begin{align}
       -b\leq (1-a)\Tr\{T\Gamma_v\}-b\leq \Tr\{H_v\Gamma_v\}<\infty,
   \end{align}
   which concludes the proof.
\end{proof}
We can now prove Proposition \ref{prop: Gibbs state density in densset}.
\begin{proof}[Proof of Proposition \ref{prop: Gibbs state density in densset}]
    For simplicity we introduce $\alpha_j=\e^{-\beta \lambda_j}/Z(v)$ such that the Gibbs state is $\Gamma_v=\sum_{j=1}^\infty \alpha_j \ket{\psi_j}\bra{\psi_j}.$ Then, its density is given by $\rho=\sum_{j=1}^\infty \alpha_j \rho_j$ with $\psi_j \mapsto \rho_j$. By Theorem \ref{th:N-rep and properties of densset} all the pure state densities $\rho_j$ are contained in $\densset$. We define a sequence $\rho^n=\sum_{j=1}^n \alpha_j \rho_j$. Clearly, $\rho^n \to \rho$ in $L^1(\T)$ and $\norm{\sqrt{\rho^n}}_2^2\leq N$. Since $\int (\nabla \sqrt{\rho})^2\d x$ is convex on $\densset$~\cite[remark after Eq.~(1.10)]{Lieb1983}, we find by Jensen's inequality and by Theorem~\ref{th:N-rep and properties of densset} that
    \begin{align}
        \int (\nabla \sqrt{\rho^n})^2 \d x \leq\sum \limits_{j=1}^n \alpha_j \int (\nabla \sqrt{\rho_j})^2 \d x\leq \sum \limits_{j=1}^n \alpha_j 2T(\psi_j)\leq  2\Tr\{T \Gamma_v\}<\infty,
    \end{align}
    where we get a uniform bound by Lemma \ref{lem: finite kinetic energy of Gibbs state}. In particular, $\sqrt{\rho^n}$ is a bounded sequence in $H^1(\T)$ and by the Banach--Alaoglu theorem there is a subsequence (also denoted by $\sqrt{\rho^n}$) and $f \in H^1(\T)$ such that $\sqrt{\rho^n}\rightharpoonup f$ weakly in $H^1(\T)$. But since $H^1(\T)$ is compactly embedded in $L^2(\T)$ there is also a subsequence (again denoted by $\sqrt{\rho^n}$) such that $\sqrt{\rho^n}\to f$ strongly in $L^2(\T)$. Then, we have
    \begin{align}
       \norm{\rho^n-f^2}_1=\int (\sqrt{\rho^n}-f)(\sqrt{\rho^n}+f)\d x \leq \norm{\sqrt{\rho^n}-f}_2\norm{\sqrt{\rho^n}+f}_2 \to 0 
    \end{align}
    since $\norm{\sqrt{\rho^n}+f}_2$ is bounded. This means that $\rho^n \to f^2$ strongly in $L^1(\T)$ and we conclude $\sqrt{\rho}=f \in H^1(\T)$.
\end{proof}

\end{appendix}

\bibliographystyle{apsrev4-1-mod}
\bibliography{refs}
\end{document}